\setlist{leftmargin=*}
\def\red#1{\textcolor{red}{#1}}
\long\def\comment#1{}
\def\ie{$i.e.$}
\def\eg{$e.g.$}
\newcommand{\partitle}[1]{\smallskip \noindent \textbf{#1.}}
\newcommand{\firstpartitle}[1]{\noindent \textbf{#1.}}
\newcommand{\bench}{{\sc LeaFBench}\xspace}
\newcommand{\name}{{\sc ZeroPrint}\xspace}
\newcolumntype{L}[1]{>{\raggedright\arraybackslash}p{#1}}
\newtheorem{definition}{Definition}
\newtheorem{theorem}{Theorem}
\newtheorem{lemma}{Lemma}
\newtcolorbox{takeawaybox}{
  colback=gray!20,
  colframe=gray!20,
  coltitle=black,
  arc=4pt,
  boxrule=0.5pt,
  boxsep=2pt,
  left=2pt,
  right=2pt,
  top=2pt,
  bottom=2pt,
  before skip=0.5\baselineskip,
  after skip=0.5\baselineskip
}
\newcommand{\chatbox}[2][Templates]{%
\begin{center}
    \begin{tikzpicture}[
            chatbox_inner/.style={
                rectangle, 
                rounded corners, 
                opacity=0, 
                text opacity=1, 
                font=\sffamily\scriptsize,
                text width=0.46\textwidth, 
                text height=9pt, 
                inner xsep=6pt, 
                inner ysep=6pt
            },
           chatbox_prompt_inner/.style={chatbox_inner, align=flush left, xshift=0pt, text height=11pt},
           chatbox_user_inner/.style={chatbox_inner, align=flush left, xshift=0pt},
           chatbox_gpt_inner/.style={chatbox_inner, align=flush left, xshift=0pt},
           chatbox/.style={chatbox_inner, draw=black!25, fill=gray!7, opacity=1, text opacity=0},
           chatbox_prompt/.style={chatbox, align=flush left, fill=gray!1.5, draw=black!30, text height=10pt},
           chatbox_user/.style={chatbox, align=flush left},
           chatbox_gpt/.style={chatbox, align=flush left},
           chatbox2/.style={chatbox_gpt, fill=green!25},
           chatbox3/.style={chatbox_gpt, fill=red!20, draw=black!20},
           chatbox4/.style={chatbox_gpt, fill=yellow!30},
           labelbox/.style={
           rectangle, 
           rounded corners, 
           draw=black!50, 
           font=\sffamily\scriptsize\bfseries, 
           fill=gray!5, 
           inner sep=3pt
           },
        ]

        \node[chatbox_user] (q1)[align=justify, text width=0.45\textwidth] {#2};
        \node[chatbox_user_inner] (q1_text)[align=justify, text width=0.45\textwidth] at (q1) {#2};
        \node[labelbox, anchor=north west, yshift=5pt, xshift=5pt] at (q1.north west) {\textbf{#1}};
    \end{tikzpicture}
\end{center}
}
\definecolor{level1}{RGB}{255, 249, 196}  % [0, 0.6]
\definecolor{level2}{RGB}{255, 241, 118}  % (0.6, 0.7]
\definecolor{level3}{RGB}{220, 231, 117}  % (0.7, 0.8]
\definecolor{level4}{RGB}{174, 213, 129}  % (0.8, 0.9]
\definecolor{level5}{RGB}{139, 195, 74}   % (0.9, 1.0]
\newcommand{\colorauc}[1]{%
  \ifdim #1pt < 0.6pt \cellcolor{level1!80}#1\else
  \ifdim #1pt < 0.7pt \cellcolor{level2!70}#1\else
  \ifdim #1pt < 0.8pt \cellcolor{level3!70}#1\else
  \ifdim #1pt < 0.9pt \cellcolor{level4!70}#1\else
  \cellcolor{level5!70}#1\fi\fi\fi\fi
}
\newcommand{\colormd}[1]{%
  \ifdim #1pt < 0.5pt \cellcolor{level1!80}#1\else
  \ifdim #1pt < 1.0pt \cellcolor{level2!80}#1\else
  \ifdim #1pt < 2.0pt \cellcolor{level3!80}#1\else
  \ifdim #1pt < 2.5pt \cellcolor{level4!80}#1\else
  \cellcolor{level5!80}#1\fi\fi\fi\fi
}
\begin{document}

\title{Reading Between the Lines: Towards Reliable Black-box LLM Fingerprinting via Zeroth-order Gradient Estimation}

% Authors must not appear in the submitted version. They should be hidden
% as long as the \iclrfinalcopy macro remains commented out below.
% Non-anonymous submissions will be rejected without review.

\author{Shuo Shao}
% \authornote{State Key Laboratory of Blockchain and Data Security, Zhejiang University.}
\affiliation{%
  \institution{Zhejiang University}
  \city{Hangzhou}
  \country{China}
}
\email{shaoshuo\_ss@zju.edu.cn}

\author{Yiming Li}
\authornote{Corresponding Author.}
\affiliation{%
  \institution{Nanyang Technological University}
  \city{Singapore}
  \country{Singapore}
}
\email{liyiming.tech@gmail.com}

\author{Hongwei Yao}
\affiliation{%
  \institution{City University of Hong Kong}
  \city{Hong Kong}
  \country{China}
}
\email{yao.hongwei@cityu.edu.hk}

\author{Yifei Chen}
% \authornotemark[1]
\affiliation{%
  \institution{Zhejiang University}
  \city{Hangzhou}
  \country{China}
}
\email{yifei.chen@zju.edu.cn}

\author{Yuchen Yang}
% \authornotemark[1]
\affiliation{%
  \institution{Zhejiang University}
  \city{Hangzhou}
  \country{China}
}
\email{ychyang@zju.edu.cn}

\author{Zhan Qin}
% \authornotemark[1]
% \authornote{Hangzhou High-Tech Zone (Binjiang) Institute of Blockchain and Data Security. \\ \textsuperscript{\Letter}Corresponding Author.}
\affiliation{%
  \institution{Zhejiang University}
  \city{Hangzhou}
  \country{China}
}
\email{qinzhan@zju.edu.cn}
\renewcommand{\shortauthors}{Shuo Shao et al.}
% The \author macro works with any number of authors. There are two commands
% used to separate the names and addresses of multiple authors: \And and \AND.
%
% Using \And between authors leaves it to \LaTeX{} to determine where to break
% the lines. Using \AND forces a linebreak at that point. So, if \LaTeX{}
% puts 3 of 4 authors names on the first line, and the last on the second
% line, try using \AND instead of \And before the third author name.

\begin{abstract}
The substantial investment required to develop Large Language Models (LLMs) makes them valuable intellectual property, raising significant concerns about copyright protection. LLM fingerprinting has emerged as a key technique to address this, which aims to verify a model's origin by extracting an intrinsic, unique signature (a ``fingerprint'') and comparing it to that of a source model to identify illicit copies. However, existing black-box fingerprinting methods often fail to generate distinctive LLM fingerprints. This ineffectiveness arises because black-box methods typically rely on model outputs, which lose critical information about the model's unique parameters due to the usage of non-linear functions. To address this, we first leverage Fisher Information Theory to formally demonstrate that the gradient of the model's input is a more informative feature for fingerprinting than the output. Based on this insight, we propose \name, a novel method that approximates these information-rich gradients in a black-box setting using zeroth-order estimation. \name overcomes the challenge of applying this to discrete text by simulating input perturbations via semantic-preserving word substitutions. This operation allows \name to estimate the model's Jacobian matrix as a unique fingerprint. Experiments on the standard benchmark show \name achieves a state-of-the-art effectiveness and robustness, significantly outperforming existing black-box methods. %Code is available at \url{https://github.com/shaoshuo-ss/ZeroPrint}.
\end{abstract}

%%
%% The code below is generated by the tool at http://dl.acm.org/ccs.cfm.
%% Please copy and paste the code instead of the example below.
%%
\begin{CCSXML}
<ccs2012>
<concept>
<concept_id>10002978</concept_id>
<concept_desc>Security and privacy</concept_desc>
<concept_significance>500</concept_significance>
</concept>
<concept>
<concept_id>10010147.10010178.10010179</concept_id>
<concept_desc>Computing methodologies~Natural language processing</concept_desc>
<concept_significance>500</concept_significance>
</concept>
</ccs2012>
\end{CCSXML}

\ccsdesc[500]{Security and privacy}
\ccsdesc[500]{Computing methodologies~Natural language processing}

%%
%% Keywords. The author(s) should pick words that accurately describe
%% the work being presented. Separate the keywords with commas.
\keywords{Model Fingerprinting; AI Copyright Protection; Large Language Model; Trustworthy AI}

% \received{20 February 2007}
% \received[revised]{12 March 2009}
% \received[accepted]{5 June 2009}

\maketitle
\newcommand\webconfavailabilityurl{https://doi.org/10.5281/zenodo.18325480}
\ifdefempty{\webconfavailabilityurl}{}{
\begingroup\small\noindent\raggedright\textbf{Resource Availability:}\\
% please change the following context to include multiple artifacts if necessary, including data, models, code, etc.
The source code of this paper has been made publicly available at \url{\webconfavailabilityurl} and \url{https://github.com/shaoshuo-ss/ZeroPrint}.
\endgroup
}

\section{Introduction}
\label{sec:intro}

Large Language Models (LLMs) have become fundamental in revolutionizing the digital landscape~\cite{openai2023gpt, guo2025deepseek}, powering a vast array of web applications from intelligent search engines~\cite{sancheti2024llm, xiong2024search} and conversational agents~\cite{deng2024large, ning2025survey} to sophisticated content creation platforms~\cite{comanici2025gemini}. However, the development of these state-of-the-art (SOTA) models demands substantial investment in computational resources, massive datasets, and extensive research expertise, rendering them highly valuable intellectual property for their developers~\cite{wang2021riga, zeng2024huref, zhang2025reef}. Consequently, this value has given rise to significant security concerns regarding their copyright. Adversaries may engage in copyright infringement by using open-source models for commercial purposes without authorization~\cite{shao2025sok}, or even resort to model stealing~\cite{carlini2024stealing, yao2025black}, where proprietary models may be illicitly replicated and deployed. Protecting the copyright of these models is therefore not just a legal formality but a critical necessity to promote sustainable innovation and maintain a trustworthy AI ecosystem~\cite{xu2025copyright, ren2024sok}.

% To address these copyright concerns, \emph{LLM Copyright Auditing} has been introduced, which aims to determine whether a \emph{suspicious model} is derived from a copyrighted \emph{source model}~\cite{shao2025sok}. Two primary paradigms have emerged to accomplish this task: \emph{LLM watermarking}~\cite{li2023protecting, xu2024instructional, shao2025explanation} and \emph{LLM fingerprinting}~\cite{pasquini2025llmmap, zeng2024huref, zhang2025reef, gubri2024trap}. The former involves proactively injecting a verifiable signal (\ie, watermark) into a model's parameters or outputs before its public release~\cite{pang2025modelshield, xu2024instructional}. Ownership is later verified by extracting and validating this hidden watermark from a suspicious model. In contrast, the latter seeks to extract intrinsic and unique features (\ie, fingerprint) from a model in a non-invasive manner, without altering the model itself~\cite{mcgovern2025your, zheng2022dnn}. Copyright attribution is then performed by measuring the similarity between the fingerprints of a suspicious model and a source model. Given the immense scale of modern LLMs, watermarking techniques that often rely on fine-tuning incur prohibitive computational overhead. Furthermore, watermarking is inapplicable for protecting models that have already been deployed or released. Consequently, LLM fingerprinting has emerged as a more practical and mainstream solution for copyright auditing.

To address copyright concerns, LLM fingerprinting has emerged as a promising solution~\cite{pasquini2025llmmap, zeng2024huref, zhang2025reef, gubri2024trap}. This technique aims to determine whether a third-party suspicious model is derived from a copyrighted source model by extracting its intrinsic and unique features (\ie, a fingerprint) in a non-invasive manner~\cite{mcgovern2025your, zheng2022dnn, chen2022copy}. Copyright attribution is then performed by measuring the similarity between the fingerprints of the two models. Compared to invasive LLM watermarking techniques~\cite{li2023protecting, xu2024instructional, shao2025explanation}, which proactively embed a verifiable signal into a model's parameters or outputs before its release, fingerprinting offers significant advantages~\cite{shao2025sok}. First, it does not require model fine-tuning, which means it has a smaller overhead and also does not degrade the model's performance. Second, it can be applied to protect models that have already been deployed or released, a scenario where watermarking is not a viable option. Consequently, LLM fingerprinting has become a more practical and mainstream solution for copyright auditing.

Existing LLM fingerprinting techniques are primarily classified based on the level of access to the suspicious model, leading to two categories: white-box and black-box methods~\cite{shao2025sok}. White-box fingerprinting assumes full access to a model's internal components, including its architecture and parameters. This allows fingerprints to be extracted directly from static model weights~\cite{zeng2024huref, zheng2022dnn}, intermediate representations~\cite{zhang2025reef, zhang2024easydetector}, or gradients of weights~\cite{wu2025gradient}. Conversely, black-box fingerprinting operates under the constraint that the model developer can only interact with the model through the API and observe the corresponding outputs. This paradigm is further divided into two strategies. The first, \emph{untargeted fingerprinting}~\cite{sun2025idiosyncrasies, pasquini2025llmmap, jin2024proflingo}, involves submitting a selected set of queries to different models and comparing their outputs, where high similarity can indicate copyright infringement. The second strategy, \emph{targeted fingerprinting}~\cite{tsai2025rofl, gubri2024trap}, generates a unique set of query-response pairs by optimizing them on a source model and then verifies if a suspicious model can consistently reproduce these specific outputs.

However, a significant performance gap persists between white-box and black-box fingerprinting methods. A recent study~\cite{shao2025sok} reveals that white-box techniques, especially static methods that extract features directly from model parameters, achieve near-perfect auditing performance. This effectiveness likely stems from the immense scale of LLM parameters, where the distinct training paths of independently developed models cause their static weights to become inherently unique identifiers. 
The success of static fingerprinting thus suggests that \emph{model parameters may be the most critical features for reliable auditing}. A key challenge for black-box fingerprinting is therefore how to extract features that retain as much information about these parameters as possible. Existing black-box methods primarily rely on comparing model outputs. However, the extensive use of non-linear functions (\eg, GELU and SiLU~\cite{hendrycks2016gaussian}) throughout an LLM's architecture progressively compresses parameter information, leading to significant information loss in the final output. This raises a critical research question for the field: 

% \vspace{0.3em}
\emph{Is there another feature, accessible under black-box conditions, that can preserve more information about model parameters than output?}
% \vspace{0.1em}

To address this question, we first ground our investigation in a formal analysis of feature information content based on Fisher Information Theory~\cite{ly2017tutorial}. Our theoretical analysis reveals a crucial insight: given the input $X$, the amount of information about the parameter $W$ that can be obtained by observing the gradient $D=\mathrm{d}Y/\mathrm{d}X$ is approximately always greater than that by observing the output $Y$. This finding suggests that the gradient is an inherently more distinctive feature for a reliable fingerprint. Based on this understanding, we propose \name, a novel LLM fingerprinting method that captures this information-rich gradient in a black-box setting. The core challenge is that gradients are not directly accessible in the black-box auditing scenario. To overcome this, we adapt the principles of zeroth-order optimization~\cite{chen2017zoo}, a technique designed to approximate gradients using only input-output queries. However, classical zeroth-order methods require applying small, continuous perturbations to inputs, which is a process not directly applicable to the discrete and symbolic nature of text. \name tackles this gap by devising a method to simulate these perturbations in the text domain by generating semantically preserved variations of base queries through word substitution. By submitting these base and perturbed queries to a model, we can observe the corresponding changes in its output embeddings. These input-output difference pairs then allow us to estimate the local Jacobian matrix of the model using a regression-based approach, which serves as its unique fingerprint. By comparing the Jacobian fingerprint of a suspicious model with that of a source model, we can reliably determine its provenance.

Our contributions can be summarized as follows.
\begin{itemize}
    \item We provide a formal theoretical analysis grounded in Fisher Information Theory. Our analysis mathematically demonstrates that a model's gradient with respect to its input contains more information about the model's parameters than its output does, establishing gradients as a more robust feature for fingerprinting.
    \item We introduce \name, a novel black-box fingerprinting framework that estimates gradients as fingerprints without direct access to the model. \name overcomes the challenge of operating in a discrete text domain by simulating input perturbations through semantic-preserving word substitutions, enabling the approximation of the Jacobian matrix as a distinctive fingerprint.
    \item We conduct extensive experiments on \bench, a representative benchmark for LLM copyright auditing. The results show that \name consistently outperforms existing SOTA black-box fingerprinting methods across various metrics, demonstrating its superior effectiveness and reliability.
\end{itemize}

\section{Background}

\subsection{Large Language Models}

Large Language Models (LLMs) have become a cornerstone of the modern web, fundamentally reshaping how users interact with online services and information~\cite{deng2024large, zhu2025collaborative, luo2026shadow}. An LLM is a sophisticated computational system, typically defined by its vast parameter scale~\cite{guo2025deepseek, qwen2.5, grattafiori2024llama3}. The generation of text by these models follows an auto-regressive paradigm. This means that the model produces text sequentially, one token at a time, where each new token is conditionally dependent on the sequence of previously generated tokens. The probability of generating a sequence of tokens $Y=(y_1, y_2, \dots, y_T)$ is factorized into a product of conditional probabilities:
\begin{equation}
    P(Y)=\Pi_{t=1}^T P(y_t|y_1, y_2, \dots, y_{t-1}).
\end{equation}

% At its core, the architecture of a modern LLM is primarily built upon the Transformer architecture~\cite{vaswani2017attention}. This architecture is fundamentally composed of a deep stack of layers, each performing a series of computations that are a blend of linear transformations and non-linear activations. The most critical of these is the self-attention mechanism. For a given input sequence of token embeddings, represented by a matrix $X$, the mechanism first projects this input into three distinct matrices: the Query $Q=W_qX$, the Key $K=W_kX$, and the Value $V=W_vX$. These three matrices are then used in the scaled dot-product attention calculation, which allows the model to weigh the relevance of every other token in the sequence to the current token being processed. The formula is as follows:
% \begin{equation}
%     \text{Attention}(Q, K, V)= \text{softmax}(\frac{QK^T}{\sqrt{d_k}})V,
% \end{equation}
% where $d_k$ is the dimension of the key vectors (and query vectors). 
% The practical applications of LLMs have expanded rapidly and are increasingly integrated into our digital experiences, such as web agents~\cite{ning2025survey, deng2024large}, search engines~\cite{sancheti2024llm}, and code generation~\cite{nam2024using}. As such, they have become a key driver of technological innovation and economic productivity, representing one of the most significant advancements in computing.
This underlying mechanism enables the diverse web applications, such as search engines~\cite{sancheti2024llm}, code generation~\cite{nam2024using, he2025benchmarking}, and web agents~\cite{ning2025survey, deng2024large}. LLMs represent one of the most significant advancements in computing and have become immensely valuable digital assets. Consequently, protecting the copyright of these models within the web environment has emerged as a critical challenge.

\subsection{LLM Fingerprinting}

% As LLMs are integrated into various applications, a new attack surface has emerged, centered on identifying the underlying LLM powering a service~\cite{chen2022teacher}. This is achieved through a process known as LLM fingerprinting attack~\cite{pasquini2025llmmap, shao2025sok}, where an adversary analyzes a model's inherent characteristics to infer its identity. Such attacks pose a significant threat for two primary reasons. First, the high cost of pre-training has led to market consolidation around a few well-known foundational LLMs, which simplifies an attacker's task. Second, by identifying the base model, an attacker can often download it from a public repository, effectively gaining white-box access to craft and perfect exploits like jailbreaking~\cite{shen2024anything, qi2025majic} or prompt injection~\cite{greshake2023not, zhu2025melon} offline before targeting the live service. Broadly, existing LLM fingerprinting methods can be categorized into two types: white-box and black-box LLM fingerprinting.

As LLMs are integrated into a growing number of applications, protecting the intellectual property of these models has become a critical task~\cite{xu2025copyright, li2025rethinking}. LLM fingerprinting is a key technique that has emerged for this purpose, designed to verify a model's provenance by analyzing its inherent characteristics~\cite{shao2025sok}. In this process, a copyright auditor can extract a unique signature (\ie, a fingerprint) from a suspicious model and compare it against the fingerprint of the source model to determine if it is an illicit copy or an unauthorized derivative. LLM fingerprinting provides a reliable mechanism for model owners to audit and protect their digital assets. Broadly, existing LLM fingerprinting can be categorized into two types: white-box and black-box LLM fingerprinting~\cite{shao2025sok}.
%The necessity for this technique is highlighted by two key trends in the current AI ecosystem. First, the high cost of pre-training has led to market consolidation around a few foundational models, making them valuable intellectual property and prime targets for illicit replication. Second, if a model is identified as a fine-tuned version of a publicly available base model, unauthorized replication or malicious exploitation becomes significantly easier. 
% Therefore, 

\partitle{White-box LLM Fingerprinting} White-box fingerprinting methods operate under the assumption that the auditor has full access to a model's internal components, including its architecture and parameters. These techniques are categorized based on their feature source into three different methods: static, forward-pass, and backward-pass methods~\cite{shao2025sok}. Static fingerprinting directly analyzes a model's learned parameters as the primary identifier~\cite{zeng2024huref, zheng2022dnn}. Forward-pass methods leverage the dynamic intermediate representations generated as the model processes inputs~\cite{zhang2025reef, zhang2024easydetector}, while backward-pass techniques use the gradients produced during backpropagation as a unique signature~\cite{wu2025gradient}. Although effective, the practical application of these methods is limited to open-source models where model parameters have already been accessible.

\partitle{Black-box LLM Fingerprinting} In contrast, black-box fingerprinting methods are more broadly applicable, as they only require API-level access to query a model and observe its outputs. These approaches can be divided into two main categories: untargeted and targeted fingerprinting~\cite{shao2025sok}.
\begin{itemize}
    \item \textbf{Untargeted Fingerprinting:} This is a straightforward and intuitive approach. The core idea is based on the observation that different LLMs often exhibit unique stylistic or behavioral ``idiosyncrasies'' in their generated outputs~\cite{sun2025idiosyncrasies}. Untargeted methods first select a set of queries, then submit them to different models and analyze the characteristics of the responses to generate a fingerprint~\cite{pasquini2025llmmap, gao2025model, mcgovern2025your, ren2025cotsrf}. A high degree of similarity between the fingerprints of a source and a suspicious model indicates that one is likely a derivative of the other.
    \item \textbf{Targeted Fingerprinting:} This paradigm seeks to construct a specific set of query-response pairs that are unique to a source model and its derivatives~\cite{gubri2024trap, tsai2025rofl, jin2024proflingo}. The key idea is that these special pairs act as a fingerprint, and only models from the source's lineage will be able to consistently reproduce the predefined target responses. To perform an audit, the auditor submits these optimized queries to a suspicious model and checks if its outputs match the target responses. If a significant number of responses match, the model will be flagged as a likely derivative.
\end{itemize}
%Untargeted fingerprinting involves sending a set of queries to a model and analyzing the characteristics of its responses to generate a fingerprint~\cite{pasquini2025llmmap, sun2025idiosyncrasies, gao2025model, mcgovern2025your}. Targeted fingerprinting is more purposeful, involving creating specific query-response pairs that are unique to the source model and its derivatives~\cite{gubri2024trap, tsai2025rofl, jin2024proflingo}. An auditor then verifies whether the target LLM-integrated service can reproduce these exact responses. If so, it may unauthorizedly utilize the copyrighted source model.

Despite the applicability of black-box methods, the empirical results from \cite{shao2025sok} highlight a significant performance gap between the two paradigms. White-box methods demonstrate remarkable effectiveness and reliability due to their direct access to the model's stable and unique parameter space. Conversely, black-box techniques are found to be less reliable, as they solely depend on the outputs that are inherently more stochastic and less stable. This discrepancy underscores that reliably inferring a model's lineage with only black-box access remains challenging and highlights the significance of designing more effective solutions.

\section{Motivation and Theoretical Analysis}

The remarkable effectiveness of white-box fingerprinting, especially static methods that achieve near-perfect auditing performance by directly analyzing model parameters~\cite{shao2025sok}, offers a crucial insight. This success suggests that model parameters may be the most reliable features for fingerprinting. The rationale is that the immense scale of LLM parameters causes the static weights of independently developed models to become inherently unique identifiers.

In the black-box scenario, therefore, the key challenge is to \emph{extract features that retain as much information about the model's parameters as possible}. In this section, we analyze two features observable under black-box conditions: the model output $Y$ and the gradient (or the differential) $D=\mathrm{d}Y/\mathrm{d}X$. Existing black-box methods primarily rely on the direct observation of model outputs. However, our theoretical analysis reveals that observing the gradient $D$ contains approximately more information about the parameter $W$ than the output $Y$. This finding indicates that $D$ is likely a more reliable feature for fingerprinting, forming the theoretical foundation for our \name.

\subsection{The Nonlinearity Bottleneck: A Motivation Example}

To illustrate why observing the output $Y$ may yield less information about the model parameter $W$, let us consider a simple, fundamental unit commonly found in neural networks and LLMs:
\begin{equation}
    Y = f(WX + K),
\end{equation}
where $X$ and $Y$ represent the input and output, respectively, while $W$ and $K$ are the weight and bias. The function $f$ is a non-linear activation function. An entire neural network, including LLM, is constructed by stacking a vast number of these non-linear units. In this example, the gradient of the output $Y$ with respect to the input $X$ can be expressed as:
\begin{equation}
    D = \frac{\mathrm{d}Y}{\mathrm{d}X} = Wf'(WX+K).
\end{equation}

By analyzing both $Y$ and $D$, a key difference emerges. If we observe the gradient $D$, for a fixed input $X$, the term $f'(WX+K)$ becomes a constant value. Consequently, analyzing the characteristics of $D$ is effectively equivalent to analyzing the characteristics of the parameter $W$. In contrast, if we solely observe the output $Y$, the parameter $W$ must pass through the non-linear function $f$. Typically, irreversible non-linear functions (\eg, GELU and SiLU~\cite{hendrycks2016gaussian}) could compress the information, which can lead to a loss of information regarding $W$.

This observation leads us to a critical hypothesis: \emph{Is taking the gradient $D$ as a fingerprint a superior choice for preserving information about the parameter $W$ compared to taking the output $Y$?} In the following section, we will analyze this hypothesis from the perspective of information theory.

\subsection{The Theoretical Edge: Why Gradients Carry More Information}

In this section, we provide a theoretical analysis of the hypothesis proposed in the preceding section. To formally quantify and compare the amount of information that an observable feature contains about a model parameter, we employ the concept of \emph{Fisher Information} from information theory~\cite{ly2017tutorial}. Mathematically, Fisher Information quantifies how much an observable random variable $X$ carries about an unknown parameter $\theta$. A larger Fisher information value implies more information, which can in turn lead to a more precise estimate. We begin with the formal definition:

\begin{definition}[Fisher Information]
Let $\{p_{\theta}(x)\}_{\theta \in \Theta}$ be a family of probability density functions parameterized by $\theta$. Assume $p_{\theta}(x)$ is differentiable with respect to $\theta$ and satisfies the regularity conditions that allow interchange of differentiation and integration. The \textbf{Fisher information} of a single observation $X$ about the parameter $\theta$ is defined as
\begin{equation}
\mathcal{I}_X(\theta) 
:= 
\mathbb{E}_{X \sim p_{\theta}}
\left[
\left(
\frac{\partial}{\partial \theta} \log p_{\theta}(X)
\right)^{2}
\right],
\end{equation}
where the expectation is taken with respect to $p_{\theta}(x)$.  
\end{definition}

%Mathematically, Fisher Information quantifies the amount of information that an observable random variable $X$ carries about an unknown parameter $\theta$. A larger Fisher information value implies more information, which can in turn lead to a more precise estimate. 
Based on this definition, to analyze whether observation $D$ or observation $Y$ contains more information about the model parameter $W$, it is equivalent to comparing their respective Fisher Information with regard to the parameter $W$. Through this analysis, we arrive at the following theorem:

% \begin{theorem}
%     \label{theorem:fisher}
%     Let $Y = f(WX + K)$, where $f$ is neither a linear nor an affine function, $W$ and $K$ are constants with $W \neq 0$, and $D=\mathrm{d}Y/\mathrm{d}X=Wf'(WX+K)$. Suppose that $X$ follows a zero-mean normal distribution, denoted as $X \sim \mathcal{N}(0, \sigma_X^2)$.  The Fisher information of $Y$ with respect to $W$ satisfies:
%     \begin{equation}
%         \mathcal{I}_Y(W)\leq \frac{2}{W^2}, 
%     \end{equation}
%     while in the case of the first-order Taylor approximation, the Fisher information of $D$ is given by:
%     \begin{equation}
%         \mathcal{I}_D(W)\approx\frac{c_1^2}{W^4c_2^2\sigma_X^2}+\frac{8}{W^2},
%     \end{equation}
%     where $c_1=f'(K), c_2=f''(K)$. Consequently, in this approximation, 
%     \begin{equation}
%         \mathcal{I}_D(W)> \mathcal{I}_Y(W).
%     \end{equation}
%     % \ie, observing $D$ contains approximately more information about $W$ than observing $Y$.
% \end{theorem}

\begin{theorem}
    \label{theorem:fisher}
    Let $Y = f(WX + K)$, where $f$ is neither a linear nor an affine function, $W$ and $K$ are constants with $W \neq 0$, and $D=\mathrm{d}Y/\mathrm{d}X=Wf'(WX+K)$. Suppose that $X$ follows a zero-mean normal distribution, denoted as $X \sim \mathcal{N}(0, \sigma_X^2)$, and $f''(K)\neq 0$. Under the first-order Taylor approximation, the Fisher information of $Y$ and $D$ with respect to $W$ satisfies:
    \begin{equation}
        \mathcal{I}_D(W) \geq (\frac{c_1^2}{2W^2c_2^2\sigma_X^2}+4) \cdot \mathcal{I}_Y(W),
    \end{equation}
    where $c_1=f'(K), c_2=f''(K)$.
    % \ie, observing $D$ contains approximately more information about $W$ than observing $Y$.
\end{theorem}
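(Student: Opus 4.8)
The plan is to reduce both observables to Gaussian random variables via the stated first-order Taylor approximation, and then apply the closed-form expression for the Fisher information of a normal law whose mean and variance both depend on the parameter.

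First I would set $U = WX + K$, so that $U - K = WX$ has zero mean (since $\mathbb{E}[X]=0$), and expand $f$ and $f'$ about the point $K$. To first order, $f(WX+K) \approx f(K) + f'(K)\,WX$, giving $Y \approx c_0 + c_1 W X$ with $c_0 = f(K)$; and $f'(WX+K) \approx f'(K) + f''(K)\,WX$, giving $D = Wf'(WX+K) \approx c_1 W + c_2 W^2 X$. Because $X \sim \mathcal{N}(0,\sigma_X^2)$ and both approximations are affine in $X$, I obtain
\[
Y \sim \mathcal{N}\big(c_0,\; c_1^2 W^2 \sigma_X^2\big), \qquad D \sim \mathcal{N}\big(c_1 W,\; c_2^2 W^4 \sigma_X^2\big).
\]
The crucial structural difference to highlight is that the mean of $Y$ is the constant $c_0$, independent of $W$, whereas the mean of $D$ equals $c_1 W$ and genuinely depends on $W$; this asymmetry is the source of the extra information carried by $D$.

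Next I would invoke (or quickly derive) the Fisher information of a normal law $\mathcal{N}(\mu(W), v(W))$ with respect to $W$, namely
\[
\mathcal{I}(W) = \frac{(\mu'(W))^2}{v(W)} + \frac{(v'(W))^2}{2\,v(W)^2}.
\]
Obtaining this requires differentiating the Gaussian log-density in $W$, squaring, and taking the expectation using $\mathbb{E}[(Z-\mu)^2]=v$ and $\mathbb{E}[(Z-\mu)^4]=3v^2$ (the odd central moments vanishing); this moment bookkeeping is the most calculation-heavy step, though it is entirely routine. Substituting the parameters of $Y$ (for which $\mu'=0$) yields $\mathcal{I}_Y(W) = \frac{(2c_1^2 W \sigma_X^2)^2}{2(c_1^2 W^2 \sigma_X^2)^2} = 2/W^2$, and substituting those of $D$ yields $\mathcal{I}_D(W) = \frac{c_1^2}{c_2^2 W^4 \sigma_X^2} + \frac{8}{W^2}$, where the first summand comes precisely from the $W$-dependent mean of $D$. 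Here the hypothesis $f''(K)=c_2\neq 0$ is used to ensure $\mathrm{Var}(D)>0$ so the division is legitimate, and $W\neq 0$ throughout.

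Finally I would compare with the claimed bound: expanding $\left(\frac{c_1^2}{2W^2 c_2^2 \sigma_X^2}+4\right)\cdot \mathcal{I}_Y(W) = \left(\frac{c_1^2}{2W^2 c_2^2 \sigma_X^2}+4\right)\cdot \frac{2}{W^2} = \frac{c_1^2}{c_2^2 W^4 \sigma_X^2} + \frac{8}{W^2}$, which coincides exactly with the expression just computed for $\mathcal{I}_D(W)$. Thus the inequality holds, in fact with equality under the first-order approximation, so the stated $\geq$ follows. I expect the main obstacle to be conceptual rather than algebraic: one must correctly recognize that $Y$ discards the $W$-dependence of its mean under the nonlinearity while $D$ retains a mean proportional to $W$, since this is exactly what the Fisher-information accounting converts into the additive gain term $\frac{c_1^2}{c_2^2 W^4 \sigma_X^2}$. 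A secondary subtlety worth a remark is that, beyond the approximation, a non-invertible $f$ can only \emph{decrease} $\mathcal{I}_Y$ relative to the information contained in $U=WX+K$ (by the data-processing property of Fisher information), which is consistent with reporting the result as a lower bound rather than an equality.
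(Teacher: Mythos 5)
Your computations are correct and reach the stated bound, but you handle $\mathcal{I}_Y(W)$ by a genuinely different (and slightly weaker) route than the paper. Both you and the paper rely on the same key lemma --- the Fisher information of a Gaussian with parameter-dependent mean and variance --- and the same first-order expansion of $f'$ to get $D \approx c_1 W + c_2 W^2 X$ and hence $\mathcal{I}_D(W) \approx \frac{c_1^2}{c_2^2 W^4 \sigma_X^2} + \frac{8}{W^2}$; that half of your argument matches the paper exactly. The divergence is on the $Y$ side: you linearize $f$ itself, writing $Y \approx f(K) + c_1 W X$, and compute $\mathcal{I}_Y(W) = 2/W^2$ with \emph{equality}, so under your reading the theorem holds as an equality of two approximations. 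The paper never approximates $Y$ at all: it observes that $Z = WX + K$ is exactly Gaussian with mean $K$ and variance $W^2\sigma_X^2$, gets $\mathcal{I}_Z(W) = 2/W^2$ from the lemma, and then invokes the data-processing inequality (since $Y = f(Z)$ with $f$ independent of $W$) to obtain the exact bound $\mathcal{I}_Y(W) \leq 2/W^2$. The paper's route buys two things. First, the only approximation in the whole argument is the one applied to $D$, and the ``$\geq$'' genuinely encodes the phenomenon the theorem is about --- the nonlinearity of $f$ can only destroy information about $W$ --- whereas in your version the linearization of $Y$ erases that nonlinearity, so the inequality is true only because it degenerates to an equality in the surrogate model. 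Second, the DPI argument covers the case $c_1 = f'(K) = 0$: there your linearized $Y$ has zero variance, the Gaussian Fisher-information formula does not apply, and the cancellation yielding $2/W^2$ breaks down, while $\mathcal{I}_Y \leq \mathcal{I}_Z$ still holds. Note that your own closing remark about the data-processing property \emph{is} the paper's argument; promoting it from a side comment to the actual derivation of the $Y$-side bound would make your proof coincide with the paper's and remove both issues.
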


% \begin{proof}
    % Please refer to Appendix~\ref{sec:proof}.
% \end{proof}

The proof is in Appendix~\ref{sec:proof}. Theorem~\ref{theorem:fisher} provides the mathematical proof for our hypothesis. The coefficient $c_1^2/(2W^2c_2^2\sigma_X^2)+4 > 1$, leading to $\mathcal{I}_D(W) > \mathcal{I}_Y(W)$. The result formally establishes that the gradient $D$ contains approximately more information about the parameter $W$ than the output $Y$ does. This theoretical advantage confirms that the gradient is an inherently more distinctive and reliable feature for a robust model fingerprint.

\section{Methodology}

The theoretical analysis in the previous section provides strong evidence that gradients are more informative features for LLM fingerprinting than outputs. However, in a practical black-box scenario, we lack the direct access required for backpropagation to compute these gradients precisely. To bridge this gap, we introduce \name, a novel black-box LLM fingerprinting method that leverages zeroth-order gradient estimation to approximate this information-rich feature. In the following sections, we will first define the threat model under which our method operates and then present a detailed introduction to the \name framework.

\subsection{Threat Model}

% In this paper, we consider a typical and strict threat model for black-box LLM fingerprinting, which involves two primary parties: the copyright auditor, who aims to protect their intellectual property, and a potential adversary, who seeks to infringe upon it.
In this paper, we consider a typical and strict threat model for black-box LLM fingerprinting, which involves two primary parties: the copyright auditor, who aims to protect their intellectual property, and a potential adversary, who seeks to infringe upon it.

\partitle{Adversary's Assumptions} The adversary's primary motivation is to leverage a SOTA, copyrighted model to develop their own service with minimal investment. We assume the adversary obtains an illicit copy or an unauthorized derivative of a source model and deploys it for commercial use. The adversary possesses full control over this suspicious model and may perform modifications (\eg, fine-tuning on a custom dataset or altering its behavior with different system prompts) to either adapt it for a specific task or attempt to obfuscate its origin.

\partitle{Auditor's Assumptions} The auditor operates under a set of constraints typical of a realistic black-box setting. We assume the auditor has the following capabilities:
\begin{itemize}
    \item The auditor has only black-box, API-level access to both the source and suspicious models. This access is limited to providing inputs and observing the corresponding textual outputs. Internal states and output logits are inaccessible in our threat model.
    \item The auditor's interactions with the source and suspicious LLMs are bound by a query limit to ensure practicality and mimic real-world API usage constraints. We assume the auditor can send a maximum of $q$ queries to a given LLM. %In our main experiments, we set this limit to $q=200$.
\end{itemize}

\begin{figure*}
    \centering
    \includegraphics[width=0.99\linewidth]{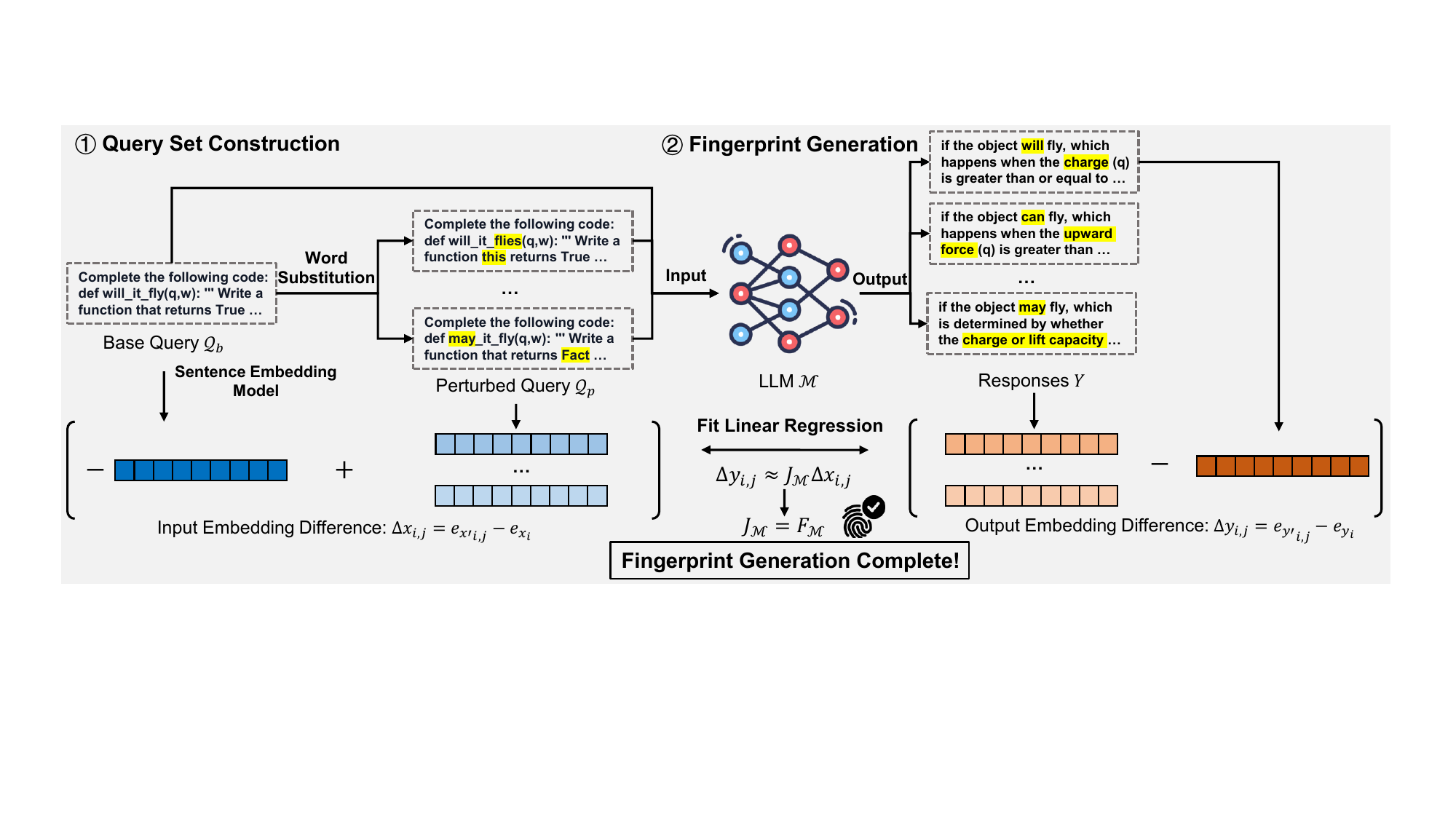}
    \vspace{-1em}
    \caption{The pipeline of \name. The process starts with creating a query set by generating semantically similar ``Perturbed Queries'' from a ``Base Query'' through word substitution. Both query types are fed into the LLM to get responses. All inputs and outputs are converted to embeddings to calculate input and output difference vectors. Finally, a linear regression model is fitted on these differences to estimate the Jacobian matrix, which serves as the model's unique fingerprint.}
    \label{fig:pipeline}
    \vspace{-1em}
\end{figure*}

\subsection{Overview of \name}

Our \name is structured as a three-step process, designed to effectively generate and compare model fingerprints in a black-box setting. The overall workflow is shown in Figure~\ref{fig:pipeline}, as follows:
\begin{itemize}
    \item \textbf{Step 1: Query Set Construction.} The process begins with the construction of a query set, denoted by $\mathcal{Q}$. This set comprises $N$ input prompts $\{x_1, x_2, \dots, x_N\}$ that are designed to elicit distinctive responses from the models being audited.
    \item \textbf{Step 2: Fingerprint Generation.} The query set $\mathcal{Q}$ is then fed into an LLM $\mathcal{M}$ to obtain a corresponding set of outputs, $Y$. Using these input-output pairs, we employ a zeroth-order gradient estimation technique to approximate the model's Jacobian matrix for each input. The (aggregated) estimated Jacobian matrix, $J_\mathcal{M}$, serves as the unique fingerprint of the LLM. Given the source model $\mathcal{M}_o$ and the suspicious model $\mathcal{M}_s$, we extract their fingerprints $F_o, F_s$ as follows:
    \begin{equation}
        F=\text{Extract}(\mathcal{M}(\mathcal{Q}), \mathcal{Q})\triangleq J\approx \frac{\partial \mathcal{M}(\mathcal{Q})}{\partial\mathcal{Q}},
    \end{equation}
    where $\mathcal{M}$ could be either $\mathcal{M}_o$ or $\mathcal{M}_s$.
    \item \textbf{Step 3: Fingerprint Comparison.} Finally, the similarity between the fingerprint of the source model, $J_o$, and that of the suspicious model, $J_s$, is calculated to determine provenance. In our proposed \name, we utilize the \emph{Pearson correlation coefficient}, which can be interpreted as a form of normalized cosine similarity, as the primary metric. If the similarity score exceeds a predefined threshold $\tau$, the suspicious model is flagged as a derivative of the source model.
    \begin{equation}
    \label{eq:sim}
        \text{Sim}(J_o, J_s)=\text{Pearson}(J_o, J_s).
    \end{equation}
\end{itemize}

The core design of the \name method hinges on two critical components: the construction of an effective query set and the specific algorithm used for fingerprint generation. A central challenge in this process is adapting gradient estimation for a black-box setting where inputs and outputs are discrete text.

%These two components will be elaborated upon in detail in the subsequent subsections.

In a conventional continuous domain, zeroth-order optimization estimates the gradient of a function $f: \mathbb{R}^\zeta \rightarrow \mathbb{R}^\eta$ by observing its output values. This is typically done by sampling a set of $m$ random direction vectors $\{u_i\}_{i=1}^{m}$ and querying the function at points perturbed along these directions, $f(x+\delta u_i)$, where $\delta$ is a small smoothing parameter. The Jacobian matrix $J$ (\ie, gradient) can then be approximated by solving a least-squares problem:
\begin{equation}
    \min_J \sum_{i=1}^{m} \| (f(x+\delta u_i) - f(x)) - J(\delta u_i)\|_2^2.
\end{equation}

\begin{table*}[t!]
\centering
\tabcolsep=4.5mm
\renewcommand{\arraystretch}{1.05}
\caption{Overall performance comparison between \name and baseline LLM fingerprinting methods.}
\label{tab:overall}
\vspace{-1em}
\scalebox{0.86}{
\begin{tabular}{l c c c c c c}
\toprule
 & \multicolumn{1}{c}{\textbf{White-box}} & \multicolumn{5}{c}{\textbf{Black-box}} \\
\cmidrule(lr){2-2} \cmidrule(lr){3-7}
\textbf{Metric} & REEF & LLMmap & MET & SEF & TRAP & \textbf{\name (Ours)} \\
\midrule
\textbf{AUC} $\uparrow$ & 0.896($\pm$0.021) & 0.632($\pm$0.009) & 0.661($\pm$0.003) & 0.581($\pm$0.008) & 0.712($\pm$0.027) & \textbf{0.720}($\pm$0.013)  \\
% \addlinespace[0.3ex]
\textbf{pAUC} $\uparrow$ & 0.832($\pm$0.019) & 0.635($\pm$0.009) & 0.658($\pm$0.003) & 0.646($\pm$0.004) & 0.625($\pm$0.006) & \textbf{0.683}($\pm$0.005) \\
\textbf{TPR@1\%FPR} $\uparrow$ & 0.634($\pm$0.021) & 0.253($\pm$0.032) & 0.329($\pm$0.004) & 0.282($\pm$0.012) & 0.227($\pm$0.020) & \textbf{0.366}($\pm$0.009) \\
% \addlinespace[0.3ex]
\textbf{MD} $\uparrow$ & 2.091($\pm$0.058) & 0.521($\pm$0.021) & \textbf{1.457}($\pm$0.004) & 0.172($\pm$0.014) & 1.382($\pm$0.064) & \textbf{1.457}($\pm$0.015) \\
\bottomrule
\end{tabular}
\vspace{-2em}
}
% \begin{minipage}{\textwidth}
% \footnotesize
% $^\dagger$ TRAP operates under a \textit{semi-black-box} setting: black-box access to the suspicious model, but white-box access to the source model.
% \end{minipage}
% \vspace{-1em}
\end{table*}

However, this approach is not directly applicable in our black-box scenario. The input and output spaces are discrete and symbolic (\ie, text), not continuous vector spaces. It is not meaningful to add a small random vector to an input sentence. To overcome this, instead of adding continuous noise, \name generates perturbed inputs by performing semantic-preserving word substitutions on the base queries. We elaborate on the details in the following subsections.

\subsection{Query Set Construction}

The process of constructing the query set $\mathcal{Q}$ is composed of two main substeps. First, we select $n$ \emph{base queries} $\mathcal{Q}_b = \{x_1, \dots, x_n\}$ designed for stability across a wide range of models. Second, for each base query $x_i$, we generate a corresponding set of $m$ \emph{perturbed queries} $\mathcal{Q}_p^{(i)}=\{x'_{i, 1}, \dots, x'_{i,m}\}$ using a semantic-preserving word substitution technique, which is essential for the subsequent gradient estimation step. In total, we generate $n \times m + n = N$ samples as the Query Set $\mathcal{Q}$.

\partitle{Base Query Selection} A key consideration in designing a generalizable fingerprinting method is the varying instruction-following capabilities across different LLM instances, especially between pre-trained base models and their fine-tuned derivatives. Base models, for instance, often struggle to produce coherent replies to complex queries, such as those used in QA tasks. To ensure our method elicits consistent and meaningful behavior from diverse LLMs, the selected queries could align with the fundamental objective of their pre-training, which is essentially a completion task. Therefore, we utilize a code completion task to form our base query set. This approach is broadly compatible with models regardless of their fine-tuning stage. Specifically, we extract code snippets from a standard code generation dataset (\eg, HumanEval~\cite{chen2021humaneval}) and prepend them with a simple instruction: ``Complete the following code: [code snippet]''. From the dataset, we sample $n$ such code snippets to create our set of base queries.

\partitle{Constructing Perturbed Queries via Word Substitution} With the set of base queries, the next step is to generate perturbed samples for each one, which will be used to estimate the Jacobian. For each base query, we generate $n_p$ perturbed counterparts using a word substitution method grounded in semantic similarity.
The process leverages a pre-trained word embedding model (\eg, GloVe~\cite{pennington2014glove}) to find suitable replacements. For a given base query, we begin by randomly selecting $r$ replaceable words within the base query. For each of these $r$ words, we query the word embedding model to retrieve the top-$k$ most semantically similar words. A new, perturbed query is then formed by randomly replacing the original word with one of these $k$ candidates. This procedure is repeated independently to generate $m$ unique perturbed samples for each base query, resulting in a total of $n \times m$ perturbed queries. The example and the pseudocode can be found in Appendix~\ref{apd:details}.

\subsection{Fingerprint Generation via Zeroth-order Gradient Estimation}

Once the query set is constructed, the next step is to generate the LLM's fingerprint. This process involves three substeps: \textbf{(1)} querying the target LLM, \textbf{(2)} embedding the inputs and outputs into a continuous vector space, and \textbf{(3)} applying a regression method to estimate the Jacobian matrix, which serves as the final fingerprint. 

\partitle{Querying the Target LLM} The procedure begins by feeding the base queries $\{x_i\}_{i=1}^n$ and their corresponding perturbed counterparts $\{x'_{i,j}\}_{j=1}^m$ to the target LLM, $\mathcal{M}$. To mitigate the inherent randomness of LLM outputs, each query is sent to the model $t$ times, yielding a set of textual responses. 

\partitle{Embedding Inputs and Outputs} We then employ a pre-trained sentence embedding model to transform all inputs and outputs into high-dimensional vectors. For each query, the final output embedding is computed by averaging the embeddings of the $t$ responses, resulting in a stable representation. Let $E(\cdot)$ be the sentence embedding function. The embedded vector for the base input $x_i$ is $e_{x_i} = E(x_i)$, and its averaged output vector is $\bar{e}_{y_i}$. Similarly, the vectors for a perturbed query are $e_{x'_{i,j}} = E(x'_{i,j})$ and $\bar{e}_{y'_{i,j}}$.

\partitle{Estimate the Jacobian Matrix} With these vectors, we can formulate the Jacobian estimation as a linear regression problem. For each base query $x_i$, we compute the input and output difference vectors for all $m$ of its perturbations:
\begin{equation}
    \begin{aligned}
        &\text{Input difference:} &\Delta x_{i,j} = e_{x'_{i,j}} - e_{x_i}, \\
        &\text{Output difference:} &\Delta y_{i,j} = \bar{e}_{y'_{i,j}} - \bar{e}_{y_i}.
    \end{aligned}
\end{equation}

We seek to find a local Jacobian matrix $J_i$ that best models the linear relationship $\Delta y_{i,j} \approx J_i \Delta x_{i,j}$. Given the practical constraints on the number of queries allowed, the number of perturbation samples $m$ is often small relative to the dimensionality of the Jacobian matrix. This can lead to an underdetermined system, making an ordinary least squares estimate unstable. To address this, we use Ridge Regression, which introduces an L2 regularization term to produce a more robust estimate of the Jacobian:
\begin{equation}
    J_i = \arg\min_{J} \sum_{j=1}^{m} \| \Delta y_{i,j} - J \Delta x_{i,j} \|_2^2 + \alpha \| J \|_F^2,
\end{equation}
where $\alpha$ is the regularization hyperparameter and $\| \cdot \|_F$ is the Frobenius norm. Finally, a Jacobian matrix $J_i$ is computed for each of the $n$ base queries. The overall fingerprint for the model $\mathcal{M}$, denoted as $J_\mathcal{M}$, is the element-wise average of these individual Jacobian matrices. This aggregation step creates a more comprehensive and stable fingerprint that captures the model's characteristic behavior across a diverse set of inputs: $J_\mathcal{M} = \frac{1}{n} \sum_{i=1}^{n} J_i$.
% \begin{equation}
%     J_\mathcal{M} = \frac{1}{n} \sum_{i=1}^{n} J_i.
% \end{equation}

With the final fingerprints generated for the source and suspicious models, the comparison is performed by calculating their similarity using Eq.~\eqref{eq:sim} to determine copyright attribution.

\section{Evaluation}

\subsection{Experimental Settings}

\partitle{Benchmark Models} The primary benchmark models in our experiments are from \bench~\cite{shao2025sok}. \bench is built upon 7 mainstream pre-trained foundation models, including variants of Qwen~\cite{qwen2.5}, Llama~\cite{grattafiori2024llama3, touvron2023llama2}, Mistral~\cite{mistral}, Gemma~\cite{gemma2}, and TinyLlama~\cite{zhang2024tinyllama}. These models were selected with parameter counts ranging from 1.1B to 14B. \bench also implements various LLM post-development techniques, including fine-tuning, model merging, distillation, different system prompts, RAG, adversarial attacks, etc. In total, \bench provides 149 different model instances and establishes a representative benchmark for evaluating fingerprint-based LLM copyright auditing. The details of \bench can be found in Appendix~\ref{apd:details}.

\partitle{Baselines} For the baseline methods, we implement 3 untargeted black-box methods, LLMmap~\cite{pasquini2025llmmap}, Model Equality Testing (MET)~\cite{gao2025model}, and Sentence Embedding Fingerprinting (SEF)~\cite{shao2025sok}, and 1 targeted black-box method, TRAP~\cite{gubri2024trap}. We also include a white-box method, REEF~\cite{zhang2025reef}, for reference. Note that targeted methods, including TRAP, are typically \emph{semi-black-box}. They assume black-box access to the suspicious model, but white-box access to the source model, which is a slightly stronger assumption than untargeted methods. The maximum number of queries is set to $q=200$.

\partitle{Settings of \name} For default settings, we select $n=2$ base queries from HumanEval~\cite{chen2021humaneval} and generate $m=4$ perturbed queries, with a total number of $10$ queries. We input each query to the LLM $t=20$ times, leading to $200$ queries in total. For the sentence embedding model, we choose all-mpnet-base-v2~\cite{reimers2019sentence}, which is a classic yet powerful model also used in LLMmap. More detailed settings are in Appendix~\ref{apd:details}.

\partitle{Metrics} In our experiments, following prior work~\cite{shao2025sok}, we primarily leverage two metrics: \textbf{(1)} Area Under the ROC Curve (\textbf{AUC}) and \textbf{(2)} Partial AUC with FPR$\in[0, 0.05]$ (\textbf{pAUC}). The AUC provides a holistic measure of a method's ability to distinguish between related model pairs (a source model and its derivative) and unrelated pairs across all possible classification thresholds $\tau$. In contrast, the pAUC metric is specifically designed to assess performance by focusing on a specific low false-positive-rate (FPR) region (\ie, $[0, 0.05]$) of the ROC curve. Higher AUC and pAUC scores indicate superior performance of a fingerprinting method.

\begin{figure*}
    \centering
    \includegraphics[width=0.85\linewidth]{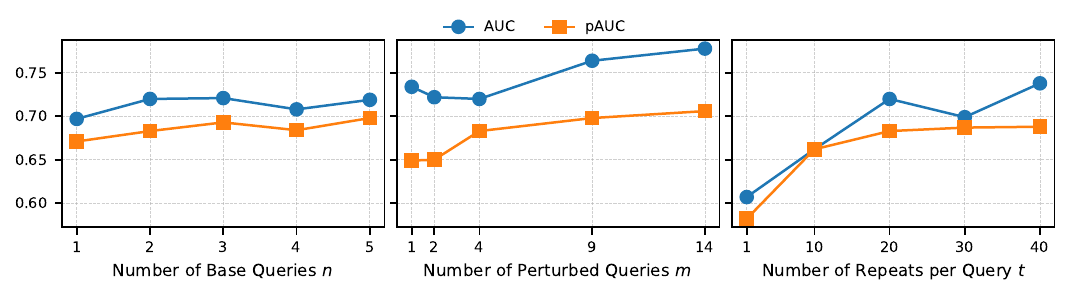}
    \vspace{-1em}
    \caption{The performance of \name with different numbers of queries. We study three factors: the number of base queries $n$, the number of perturbed queries $m$, and the number of repeats per query $t$.}
    \label{fig:querynumber}
    \vspace{-0.5em}
\end{figure*}

\subsection{Main Results}

Table~\ref{tab:overall} presents the overall performance comparison between \name and various baseline methods. In addition to the primary metrics of AUC and pAUC, we also report the True Positive Rate at a 1\% False Positive Rate (TPR@1\%FPR) and the Mahalanobis Distance (MD). The TPR@1\%FPR metric measures the true positive rate that a method can achieve while maintaining a low FPR of 1\%. The MD metric measures the separability of the fingerprint distributions, reflecting the distance between positive pairs (a source model and its derivatives) and negative pairs (independent models). A larger MD indicates a stronger ability to distinguish between related and unrelated models. For both TPR@1\%FPR and MD, higher values signify better performance.

The results demonstrate that \name achieves SOTA or comparable performance across all evaluated metrics. It significantly outperforms existing untargeted methods. Notably, \name also shows superior performance compared to the targeted, semi-black-box method TRAP, improving the AUC to 0.720 and the pAUC to 0.683. However, a performance gap remains when compared to white-box methods like REEF, which have direct access to model internals. It is worth noting that these results were achieved under a strict query limit of 200 queries per model. Subsequent experiments in our ablation study reveal that \name's performance can be further enhanced by increasing the number of queries allowed. This scalability is further detailed in Section~\ref{sec:ablation}.

\subsection{Ablation Study}
\label{sec:ablation}

% In this subsection, we study the effects of several important hyperparameters or elements in \name. %Additional ablation studies can be found in Appendix~\ref{apd:add_ablation}

\partitle{Effects of Different Numbers of Queries} The total number of queries is a critical factor influencing the performance of \name. This number is determined by three key hyperparameters: the number of base queries ($n$), the number of perturbed queries generated for each base query ($m$), and the number of repeated generations for each query ($t$). To understand their impact, we conducted a series of experiments, varying one parameter at a time while keeping others fixed. The results are illustrated in Figure~\ref{fig:querynumber}.
\begin{itemize}
    \item \textbf{Effects of Different Numbers of Base Queries $n$.} We first investigated the impact of $n$ by setting its value to 1, 2, 3, 4, and 5, with $n=2$ being the default in our main experiments. As shown in the first subplot of Figure~\ref{fig:querynumber}, the performance of \name remains relatively stable across different values of $n$. The AUC and pAUC fluctuate in a small interval. This observation suggests that the number of base queries may not have a significant impact on the effectiveness of \name. A small number of base queries (\eg, 2) is sufficient to generate a distinctive fingerprint.
    \item \textbf{Effects of Different Numbers of Perturbed Queries $m$.} Next, we examined the influence of $m$ with values set to 1, 2, 4, 9, and 14. The default value used in our main experiments was $m=4$. The results, depicted in the second subplot of Figure~\ref{fig:querynumber}, reveal that $m$ has a pronounced effect on the method's performance. When $m$ is small (\eg, 1 or 2), the estimated gradients are less stable, leading to inconsistent results (\eg, a high AUC but a low pAUC). However, as m increases to 4 and beyond, both AUC and pAUC show a clear and significant upward trend. The performance peaks at $m=14$, achieving the highest AUC (0.778) and pAUC (0.706) in this set of experiments. This trend highlights the potential of \name to achieve even greater effectiveness and robustness with an increased query budget, as more perturbed queries provide a more accurate gradient estimation.
    \item \textbf{Effects of Different Numbers of Repeats per Query $t$.} Finally, we analyzed the effect of $t$, testing values of 1, 10, 20, 30, and 40, with $t=20$ as the default setting. This parameter is designed to mitigate the inherent randomness in LLM-generated outputs. The third subplot of Figure~\ref{fig:querynumber} shows that when $t$ is small, performance suffers due to this stochasticity, yielding an AUC of only 0.607 for $t=1$. As $t$ increases, the performance improves substantially, with the AUC reaching 0.720 at $t=20$. Beyond this point, the results begin to plateau and fluctuate slightly. This indicates that at $t=20$, the averaged output embedding for a given input becomes sufficiently stable. Further increases in $t$ offer diminishing returns.
\end{itemize}

\begin{table}[t]
\centering
\tabcolsep=1.2mm
\renewcommand{\arraystretch}{1.05}
\caption{The performance of \name using base queries from different datasets.}
\label{tab:dataset}
\vspace{-1em}
\scalebox{0.85}{
\begin{tabular}{l c c c c c}
\toprule
\textbf{Dataset} & TruthfulQA & SQuAD & DROP & Wikipedia & \textbf{HumanEval (Ours)} \\
\midrule
\textbf{AUC} $\uparrow$ & 0.622 & 0.665 & 0.676 & 0.698 & \textbf{0.720}  \\
% \addlinespace[0.3ex]
\textbf{pAUC} $\uparrow$ & 0.668 & 0.666 & 0.651 & 0.647 & \textbf{0.683} \\
\bottomrule
\end{tabular}
}
\vspace{-0.5em}
\end{table}

\partitle{Effects of Different Query Data} The choice of data for constructing base queries is another crucial element. We evaluated the performance of \name using base queries derived from five different datasets: HumanEval~\cite{chen2021humaneval} (our default), three question-answering (QA) datasets (TruthfulQA~\cite{lin2022truthfulqa}, SQuAD~\cite{rajpurkar2016squad}, and DROP~\cite{dua2019drop}), and one text completion dataset (Wikipedia~\cite{wikidump}). The results, presented in Table~\ref{tab:dataset}, show that the source of the query data significantly impacts fingerprinting effectiveness. When using queries from the QA datasets, the performance was consistently poor. Similarly, the Wikipedia dataset, despite being a completion task, also failed to achieve comparable results to HumanEval. This can be attributed to the nature of the tasks. Pure text completion is inherently more diverse and less structured than code completion. Thus, various derivative models tend to exhibit less consistent behavior on text-based prompts, making it harder to extract a stable and reliable fingerprint. In contrast, the structured nature of code completion elicits more uniform responses, leading to better performance.

% TruthfulQA~\cite{lin2022truthfulqa}, SQuAD~\cite{rajpurkar2016squad}, HumanEval~\cite{chen2021humaneval}, DROP~\cite{dua2019drop}, Wikipedia~\cite{wikidump}

\partitle{Effects of Different Sentence Embedding Models} We investigated the influence of the sentence embedding model on \name's performance. We experimented with five different models of varying scales and embedding dimensions (details in Appendix~\ref{apd:details}). The results, detailed in Table~\ref{tab:model}, reveal a clear trade-off related to the embedding dimension. When the dimension is too small, the model fails to adequately capture the semantic nuances of the sentences, leading to poor performance. Conversely, a larger embedding dimension increases the size of the Jacobian matrix to be estimated. Given the constraints on the number of queries, estimating this larger matrix becomes less accurate, which also degrades the overall fingerprinting effectiveness. This is supported by the similar performance of MPNet and EmbeddingGemma, which share the same embedding dimension and achieve comparable results.

\begin{table}[t!]
\centering
\tabcolsep=0.9mm
\renewcommand{\arraystretch}{1.0}
\caption{The performance of \name using different sentence embedding models. ``Dim'' signifies the dimension of the generated sentence embeddings.}
\label{tab:model}
\vspace{-1em}
\scalebox{0.84}{
\begin{tabular}{l c c c c c}
\toprule
\textbf{Dataset}  & Model2Vec & MiniLM & MPNet \textbf{(main)} & EmbeddingGemma & Qwen3 \\
\textbf{Dim}  & 256 & 384 & 768 & 768  & 1024  \\
\midrule
\textbf{AUC} $\uparrow$ & 0.683 & 0.702 & 0.720 & 0.721 & 0.698  \\
% \addlinespace[0.3ex]
\textbf{pAUC} $\uparrow$ & 0.659 & 0.682 & 0.683  & 0.682 & 0.659 \\
\bottomrule
\end{tabular}
}
\vspace{-0.5em}
\end{table}

\subsection{Resistance to Potential Adaptive Attacks}

\begin{table}[t!]
\centering
\tabcolsep=2.5mm
\renewcommand{\arraystretch}{1.0}
\caption{The fingerprint similarities between the original instruct LLM and its attacked version using \name.}
\label{tab:attack}
\vspace{-1em}
\scalebox{0.85}{
\begin{tabular}{l c c c c c c}
\toprule
\textbf{Attack} & \multicolumn{3}{c}{\textbf{Input Paraphrasing}} & \multicolumn{3}{c}{\textbf{Output Perturbation}} \\
\cmidrule(lr){2-4} \cmidrule(lr){5-7}
\textbf{Setting} & 0.5B & 1B & 2B & 0.05 & 0.10 & 0.15  \\
\midrule
\textbf{Qwen2.5-7B} & 0.878 & 0.876 & 0.881 & 0.865 & 0.894 & 0.868  \\
% \addlinespace[0.3ex]
\textbf{Llama3.1-8B} & 0.875 & 0.913 & 0.904  & 0.887 & 0.868 & 0.903 \\
\bottomrule
\end{tabular}
}
\vspace{-1em}
\end{table}

To assess the robustness of \name, we evaluated its resilience against two potential adaptive attack strategies an adversary might employ to evade detection:
\begin{itemize}
    \item \textbf{Input Paraphrasing Attack}: This attack attempts to obfuscate queries by rewriting them using a separate, smaller LLM before submitting them to the target model. We simulated this using three paraphrasing models of varying sizes (0.5B, 1B, and 2B).
    \item \textbf{Output Perturbation Attack}: This attack involves adding noise to the model's output logits to disrupt the fingerprinting process. We tested this by introducing Gaussian noise with three different standard deviations (0.05, 0.10, and 0.15).
\end{itemize}

Detailed settings can be found in Appendix~\ref{apd:details}. The experimental results, summarized in Table~\ref{tab:attack}, demonstrate that \name maintains high fingerprint similarity under both attack scenarios, confirming its strong resilience. The robustness against the input paraphrasing attack stems from our method's reliance on the semantic difference between embeddings. While paraphrasing alters the surface form of the text, the core semantic relationship and vector difference between the base and perturbed queries remain relatively stable, allowing for a reliable estimation of the Jacobian matrix. For the output perturbation attack, \name's resilience can be attributed to two key factors. First, averaging the embeddings from multiple queries effectively filters out random, zero-mean noise. Second, since our fingerprint is derived from the estimated gradient, which is the relationship between input and output changes, it is inherently less sensitive to small, random perturbations in the absolute output values, thus preserving the fingerprint's integrity.

\begin{figure}
    \centering
    \includegraphics[width=0.90\linewidth]{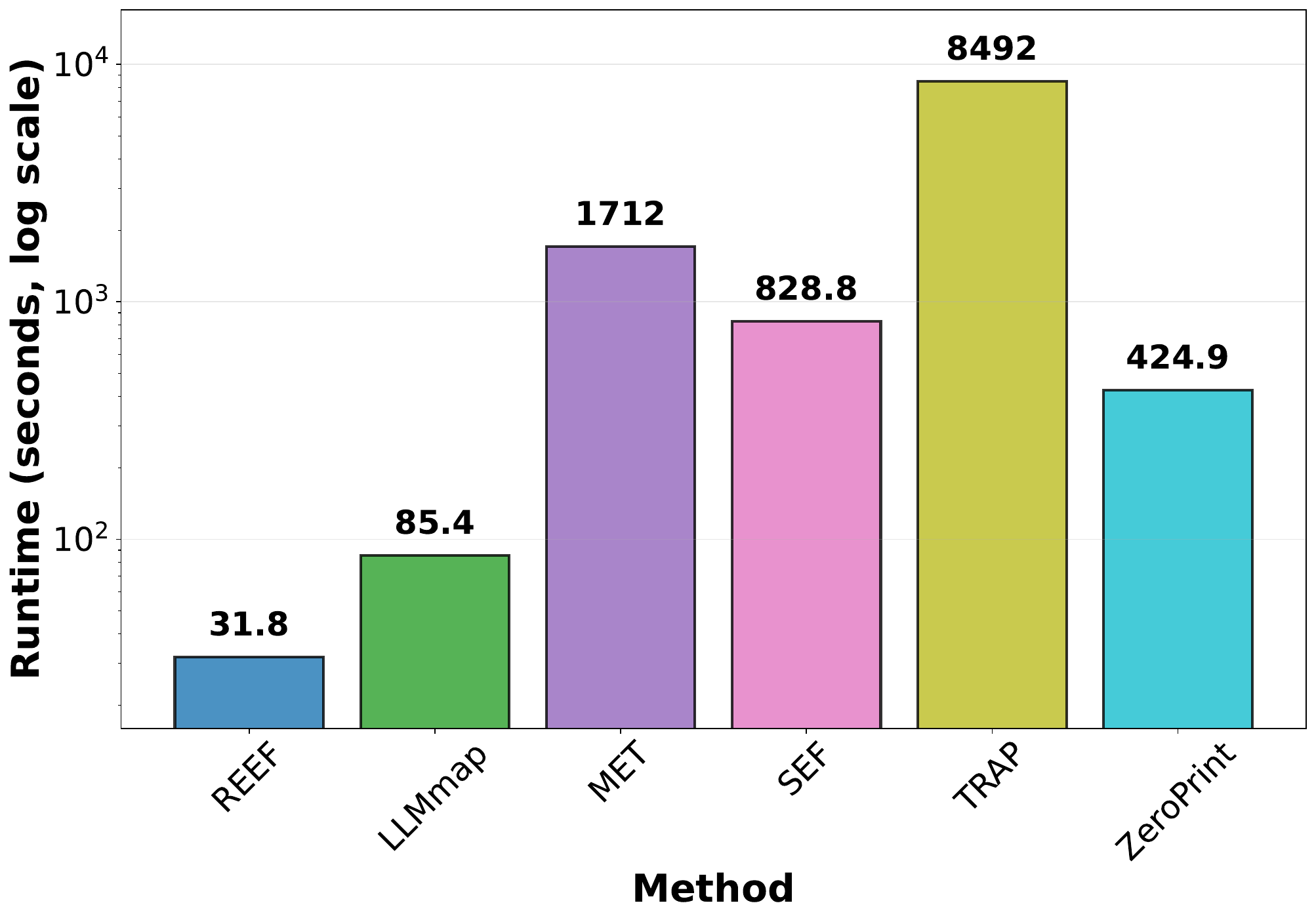}
    \vspace{-1em}
    \caption{The overhead of different fingerprinting methods.}
    \label{fig:overhead}
    \vspace{-1em}
\end{figure}

\subsection{Overhead Evaluation}
\label{sec:overhead}

To evaluate practical efficiency, we measured the total time for fingerprint extraction and verification on a subset of four LLMs. Experimental details can be found in Appendix~\ref{apd:details}.  From Figure~\ref{fig:overhead}, \name demonstrates a competitive overhead of 424.9 seconds, which is significantly faster than most black-box baselines like MET, SEF, and TRAP. Its runtime is only greater than the white-box REEF and LLMmap, the latter of which uses a minimal set of only 8 queries. Given that the fingerprinting for a single model can be completed in under two minutes on average, \name's process completes in an acceptable timeframe. Overall, \name is a computationally practical solution for real-world auditing.

\section{Conclusion}

In this paper, we addressed the critical challenge of protecting the intellectual property of large language models (LLMs) through fingerprinting, a technique that aims to identify a model's origin by extracting a unique, intrinsic signature. We began by formally demonstrating, through Fisher Information Theory, that a model's gradient contains significantly more information about its parameters than its output, establishing it as a superior feature for fingerprinting. Building on this theoretical foundation, we introduced \name, a novel black-box fingerprinting method that successfully approximates these information-rich gradients. By simulating input perturbations through semantic-preserving word substitutions, \name overcomes the inherent challenges of applying gradient estimation to the discrete domain of text. Our extensive experiments show that \name not only achieves SOTA performance, outperforming existing black-box methods in both effectiveness and robustness, but also operates with a practical overhead. This work represents a significant step forward in developing reliable and secure mechanisms for LLM copyright auditing.

% \vfill
\section*{Acknowledgments}

This research is supported in part by the ``Pioneer'' and ``Leading Goose'' R\&D Program of Zhejiang (2024C01169), the Kunpeng-Ascend Science and Education Innovation Excellence/Incubation Center, and the National Natural Science Foundation of China under Grants (62441238, U2441240). 
% Zhan Qin is also with Hangzhou High-Tech Zone (Binjiang) Institute of Blockchain and Data Security.

% \clearpage

\bibliographystyle{ACM-Reference-Format}
\bibliography{ref}

\appendix
\section*{Appendix}
% \balance
\setcounter{theorem}{0}

\section{Proof of Theorem~\ref{theorem:fisher}}
\label{sec:proof}

Before proving Theorem~\ref{theorem:fisher}, we first introduce and prove a related lemma about the Fisher information of a random variable following a normal distribution, as follows.

\begin{lemma}[Fisher Information for a Normal Distribution]
\label{lemma:normal}
Let $X$ be a random variable following a normal distribution with mean $\mu(\theta)$ and variance $\sigma^2(\theta)$, which are differentiable functions of a parameter $\theta$. The Fisher information of $X$ with respect to $\theta$ is:
\begin{equation}
\mathcal{I}_X(\theta) = \frac{1}{\sigma^2(\theta)} \left( \frac{\partial \mu(\theta)}{\partial \theta} \right)^2 + \frac{1}{2\sigma^4(\theta)} \left( \frac{\partial \sigma^2(\theta)}{\partial \theta} \right)^2.
\end{equation}
\end{lemma}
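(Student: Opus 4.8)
The plan is to compute the Fisher information directly from its definition by forming the score function $\partial_\theta \log p_\theta(x)$ and evaluating the expectation of its square against the Gaussian density. First I would write down the log-density of $X \sim \mathcal{N}(\mu(\theta), \sigma^2(\theta))$,
\begin{equation}
\log p_\theta(x) = -\tfrac{1}{2}\log(2\pi) - \tfrac{1}{2}\log \sigma^2(\theta) - \frac{(x-\mu(\theta))^2}{2\sigma^2(\theta)},
\end{equation}
and differentiate with respect to $\theta$. Since both $\mu$ and $\sigma^2$ depend on $\theta$, the chain rule produces two contributions; writing $\mu' = \partial_\theta \mu$ and $(\sigma^2)' = \partial_\theta \sigma^2$ and abbreviating the centered variable $Z = X - \mu(\theta)$, I expect the score to collapse to the compact form
\begin{equation}
\frac{\partial}{\partial\theta}\log p_\theta(X) = \frac{\mu'}{\sigma^2}\,Z + \frac{(\sigma^2)'}{2\sigma^4}\,\bigl(Z^2 - \sigma^2\bigr).
\end{equation}

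Next I would square this expression and take the expectation, which reduces the problem to the central moments of the Gaussian: $\mathbb{E}[Z] = 0$, $\mathbb{E}[Z^2] = \sigma^2$, $\mathbb{E}[Z^3] = 0$, and $\mathbb{E}[Z^4] = 3\sigma^4$. The cross term, proportional to $\mathbb{E}[Z(Z^2 - \sigma^2)] = \mathbb{E}[Z^3] - \sigma^2\mathbb{E}[Z] = 0$, vanishes because of the odd central moments. The first squared term contributes $(\mu'/\sigma^2)^2\,\mathbb{E}[Z^2] = (\mu')^2/\sigma^2$, and the second contributes $((\sigma^2)'/2\sigma^4)^2\,\mathbb{E}[(Z^2-\sigma^2)^2] = ((\sigma^2)')^2/(2\sigma^4)$, where I use $\mathbb{E}[(Z^2-\sigma^2)^2] = \mathbb{E}[Z^4] - 2\sigma^2\mathbb{E}[Z^2] + \sigma^4 = 2\sigma^4$. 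Summing the two surviving terms yields exactly the claimed identity.

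I expect the only real subtlety to be the bookkeeping forced by letting both $\mu(\theta)$ and $\sigma^2(\theta)$ vary with $\theta$, which splits the score into a ``mean-shift'' term and a ``variance-shift'' term. The additivity of their contributions in the final Fisher information hinges entirely on the vanishing of the odd central moments of the normal distribution, so that is the step I would state most carefully. Everything else is a routine moment computation, and the regularity conditions assumed in the definition justify differentiating under the integral sign; as a consistency check I would note that the score is mean-zero, $\mathbb{E}[\partial_\theta \log p_\theta(X)] = 0$, which also explains why the variance-shift term appears in the centered combination $Z^2 - \sigma^2$.
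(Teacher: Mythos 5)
Your proposal is correct and follows essentially the same route as the paper's proof: both derive the score function from the Gaussian log-likelihood, split it into the mean-shift and variance-shift terms, kill the cross term via the vanishing odd central moments, and evaluate the two surviving expectations using $\mathbb{E}[Z^2]=\sigma^2$ and $\mathbb{E}[(Z^2-\sigma^2)^2]=2\sigma^4$. Your treatment is in fact slightly more explicit than the paper's in spelling out why the cross term vanishes, but there is no substantive difference.
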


\begin{proof}[Proof of Lemma~\ref{lemma:normal}]
The probability density function of the normal distribution is:
\begin{equation}
p(x; \theta) = \frac{1}{\sqrt{2\pi\sigma^2(\theta)}} \exp\left(-\frac{(x - \mu(\theta))^2}{2\sigma^2(\theta)}\right).
\end{equation}
The log-likelihood function is:
\begin{equation}
\log p(x; \theta) = -\frac{1}{2}\log(2\pi) - \frac{1}{2}\log(\sigma^2(\theta)) - \frac{(x - \mu(\theta))^2}{2\sigma^2(\theta)}.
\end{equation}
The score function is the partial derivative of the log-likelihood with respect to $\theta$:
\begin{equation}
\frac{\partial}{\partial \theta} \log p(x; \theta) = \frac{(x - \mu(\theta))}{\sigma^2(\theta)}\frac{\partial \mu(\theta)}{\partial \theta} + \frac{(x - \mu(\theta))^2 - \sigma^2(\theta)}{2\sigma^4(\theta)}\frac{\partial \sigma^2(\theta)}{\partial \theta}.
\end{equation}
The Fisher information is the expectation of the square of the score. Squaring the score function and taking the expectation, the cross-product term will have an expectation of $0$. Thus, we only need to consider the expectation of the squared terms. Let $Z=X-\mu(\theta)\sim \mathcal{N}(0, \sigma^2(\theta))$, using the central moments of the normal distribution $\mathbb{E}[Z] = 0$, $\mathbb{E}[(Z)^2] = \sigma^2(\theta)$, and $\mathbb{E}[(Z)^4] = 3\sigma^4(\theta)$, we get:
\begin{equation}
% \begin{aligned}
\mathcal{I}_X(\theta) = \mathbb{E}\left[ \left( \frac{(Z)}{\sigma^2(\theta)}\frac{\partial \mu(\theta)}{\partial \theta} \right)^2 \right] + \mathbb{E}\left[ \left( \frac{(Z)^2 - \sigma^2(\theta)}{2\sigma^4(\theta)}\frac{\partial \sigma^2(\theta)}{\partial \theta} \right)^2 \right].
\end{equation}
For the first term, we have:
\begin{equation}
\begin{aligned}
    &\mathbb{E}\left[ \left( \frac{(Z)}{\sigma^2(\theta)}\frac{\partial \mu(\theta)}{\partial \theta} \right)^2 \right] \\ 
    =& \frac{1}{\sigma^4(\theta)} \left( \frac{\partial \mu(\theta)}{\partial \theta} \right)^2 \mathbb{E}[(Z)^2]
    = \frac{1}{\sigma^2(\theta)} \left( \frac{\partial \mu(\theta)}{\partial \theta} \right)^2.
\end{aligned}
\end{equation}
For the second term, we have:
\begin{equation}
    \begin{aligned}
        &\mathbb{E}\left[ \left( \frac{(Z)^2 - \sigma^2(\theta)}{2\sigma^4(\theta)}\frac{\partial \sigma^2(\theta)}{\partial \theta} \right)^2 \right]\\
        =& \frac{1}{4\sigma^8(\theta)} \left( \frac{\partial \sigma^2(\theta)}{\partial \theta} \right)^2 \mathbb{E}[((Z)^2 - \sigma^2(\theta))^2]\\
        =&\frac{1}{4\sigma^8(\theta)} \left( \frac{\partial \sigma^2(\theta)}{\partial \theta} \right)^2 (\mathbb{E}[(Z)^4] - 2\sigma^2(\theta)\mathbb{E}[(Z)^2] + \sigma^4(\theta)) \\
        =& \frac{1}{4\sigma^8(\theta)} \left( \frac{\partial \sigma^2(\theta)}{\partial \theta} \right)^2 (3\sigma^4(\theta) - 2\sigma^4(\theta) + \sigma^4(\theta)) \\
        =& \frac{2\sigma^4(\theta)}{4\sigma^8(\theta)} \left( \frac{\partial \sigma^2(\theta)}{\partial \theta} \right)^2 = \frac{1}{2\sigma^4(\theta)} \left( \frac{\partial \sigma^2(\theta)}{\partial \theta} \right)^2.
    \end{aligned}
\end{equation}
Therefore, we get:
\begin{equation}
    \mathcal{I}_X(\theta) = \frac{1}{\sigma^2(\theta)} \left( \frac{\partial \mu(\theta)}{\partial \theta} \right)^2 + \frac{1}{2\sigma^4(\theta)} \left( \frac{\partial \sigma^2(\theta)}{\partial \theta} \right)^2.
\end{equation}
This completes the proof of Lemma~\ref{lemma:normal}.
\end{proof}

Given Lemma~\ref{lemma:normal}, we then provide the proof of Theorem~\ref{theorem-apd:fisher}.

% \begin{theorem}
%     \label{theorem-apd:fisher}
%     Let $Y = f(WX + K)$, where $f$ is neither a linear nor an affine function, $W$ and $K$ are constants with $W \neq 0$, and $D=\mathrm{d}Y/\mathrm{d}X=Wf'(WX+K)$. Suppose that $X$ follows a zero-mean normal distribution, denoted as $X \sim \mathcal{N}(0, \sigma_X^2)$.  The Fisher information of $Y$ with respect to $W$ satisfies:
%     \begin{equation}
%         \mathcal{I}_Y(W)\leq \frac{2}{W^2}, 
%     \end{equation}
%     while in the case of the first-order Taylor approximation, the Fisher information of $D$ is given by:
%     \begin{equation}
%         \mathcal{I}_D(W)\approx\frac{c_1^2}{W^4c_2^2\sigma_X^2}+\frac{8}{W^2},
%     \end{equation}
%     where $c_1=f'(K), c_2=f''(K)$. Consequently, in this approximation, 
%     \begin{equation}
%         \mathcal{I}_D(W)> \mathcal{I}_Y(W),
%     \end{equation}
%     \ie, observing $D$ contains approximately more information about $W$ than observing $Y$.
% \end{theorem}
\begin{theorem}
    \label{theorem-apd:fisher}
    Let $Y = f(WX + K)$, where $f$ is neither a linear nor an affine function, $W$ and $K$ are constants with $W \neq 0$, and $D=\mathrm{d}Y/\mathrm{d}X=Wf'(WX+K)$. Suppose that $X$ follows a zero-mean normal distribution, denoted as $X \sim \mathcal{N}(0, \sigma_X^2)$, and $f''(K)\neq 0$. Under the first-order Taylor approximation, the Fisher information of $Y$ and $D$ with respect to $W$ satisfies:
    \begin{equation}
        \mathcal{I}_D(W) \geq (\frac{c_1^2}{2W^2c_2^2\sigma_X^2}+4) \cdot \mathcal{I}_Y(W).
    \end{equation}
    where $c_1=f'(K), c_2=f''(K)$.
    % \ie, observing $D$ contains approximately more information about $W$ than observing $Y$.
\end{theorem}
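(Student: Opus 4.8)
The plan is to reduce everything to the normal-distribution Fisher information formula of Lemma~\ref{lemma:normal}. The essential observation is that, because $X$ is zero-mean, the natural expansion point is $X=0$ (equivalently $WX+K$ near $K$), and a first-order Taylor expansion turns both $Y$ and $D$ into \emph{affine} functions of the Gaussian $X$; since an affine image of a Gaussian is again Gaussian, Lemma~\ref{lemma:normal} applies directly to each of them.

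First I would expand around $X=0$. For the output, $Y = f(WX+K) \approx f(K) + f'(K)WX = f(K) + c_1 W X$, so that $Y \sim \mathcal{N}(f(K),\, c_1^2 W^2 \sigma_X^2)$ to first order. For the gradient, $D = W f'(WX+K) \approx W\bigl(f'(K) + f''(K)WX\bigr) = c_1 W + c_2 W^2 X$, giving $D \sim \mathcal{N}(c_1 W,\, c_2^2 W^4 \sigma_X^2)$. The decisive asymmetry to flag here is that the mean of $Y$ is $f(K)$, which is independent of $W$, whereas the mean of $D$ is $c_1 W$, which depends on $W$ linearly. This is the structural reason $D$ carries more information about $W$: the parameter survives in $D$'s first moment but is washed out of $Y$'s first moment, leaving only the variance to encode it.

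Next I would read off the $W$-dependence of the mean and variance in each case and substitute into Lemma~\ref{lemma:normal}. For $Y$ the mean contribution vanishes (since $\partial_W f(K)=0$) and only the variance term survives, yielding $\mathcal{I}_Y(W) = 2/W^2$. For $D$ both terms are present: the mean term gives $c_1^2/(c_2^2 W^4 \sigma_X^2)$ and the variance term gives $8/W^2$, so $\mathcal{I}_D(W) = c_1^2/(c_2^2 W^4 \sigma_X^2) + 8/W^2$. Forming the ratio $\mathcal{I}_D(W)/\mathcal{I}_Y(W)$ and simplifying then produces exactly the factor $c_1^2/(2W^2 c_2^2 \sigma_X^2) + 4$, so the first-order approximation in fact yields \emph{equality}; the stated inequality follows immediately, and the factor is manifestly greater than $1$ since all terms are nonnegative and $c_2=f''(K)\neq 0$ keeps the denominators finite.

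I expect the only real obstacle to be conceptual rather than computational: justifying that the first-order Taylor expansion is the intended regularity assumption, and correctly bookkeeping which moments depend on $W$. In particular one must resist retaining the second-order term $\tfrac{1}{2} c_2 W^2 X^2$ in the expansion of $Y$, which would inject a spurious $W$-dependent contribution into $\mathbb{E}[Y]$; under the stated first-order approximation this term is dropped, and that is precisely what makes $\mathcal{I}_Y(W)=2/W^2$ clean. The hypothesis $f''(K)=c_2\neq 0$ is exactly what guarantees that $D$'s variance is nondegenerate and that the mean term of $\mathcal{I}_D$ is well defined, while $f$ being neither linear nor affine is what rules out the degenerate case $c_2=0$ at $K$ in the first place.
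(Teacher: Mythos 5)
Your computation of $\mathcal{I}_D(W)$ is exactly the paper's: expand $f'$ to first order about $K$, get $D \approx c_1 W + c_2 W^2 X \sim \mathcal{N}(c_1 W,\, c_2^2 W^4 \sigma_X^2)$, and apply Lemma~\ref{lemma:normal} to obtain $c_1^2/(c_2^2 W^4 \sigma_X^2) + 8/W^2$. Where you genuinely diverge is the $Y$ side. You linearize $Y \approx f(K) + c_1 W X$ and apply Lemma~\ref{lemma:normal} again to get $\mathcal{I}_Y(W) = 2/W^2$ with equality under the approximation. The paper instead makes \emph{no} approximation on this side: it notes $Z = WX + K \sim \mathcal{N}(K, W^2\sigma_X^2)$ has $\mathcal{I}_Z(W) = 2/W^2$ exactly, and invokes the Data Processing Inequality ($Y = f(Z)$ with $f$ parameter-independent) to conclude $\mathcal{I}_Y(W) \leq 2/W^2$ for the \emph{true} nonlinear $Y$. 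This is precisely why the theorem is stated as an inequality rather than your equality. The DPI route buys two things your route lacks: (i) it bounds the Fisher information of the actual output, not of a linearized surrogate, so the only approximation in the whole proof is on the $D$ side; and (ii) it does not require $c_1 = f'(K) \neq 0$ --- your linearized $Y$ has variance $c_1^2 W^2 \sigma_X^2$, which degenerates when $f'(K) = 0$, a case the theorem's hypotheses do not exclude (only $f''(K) \neq 0$ is assumed), so your Lemma application silently breaks there and you would need to handle it separately (the linearized $Y$ is then constant, $\mathcal{I}_Y = 0$, and the inequality is trivial). What your route buys in exchange is a sharper, more symmetric statement: when $c_1 \neq 0$, the first-order approximation gives the ratio exactly, which is consistent with the paper since your linearization is an invertible affine image of $Z$ and DPI is tight for such maps. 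Your structural observation --- that $W$ survives in the first moment of $D$ but is washed out of the first moment of $Y$ --- is correct and is a cleaner articulation of the mechanism than the paper gives.
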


\begin{proof}[Proof of Theorem~\ref{theorem-apd:fisher}]
    We first analyze the Fisher information of $Y$ with respect to $W$. Let $Z = WX + K$. Since $X \sim \mathcal{N}(0, \sigma_X^2)$, the random variable $Z$ is also normally distributed with mean $\mu_Z(W) = K$ and variance $\sigma_Z^2(W) = W^2\sigma_X^2$. We can now use Lemma~\ref{lemma:normal} to calculate $I_Z(W)$ with the parameter $\theta = W$. The partial derivatives of the mean and variance of $Z$ with respect to $W$ are:
    \begin{equation}
    \frac{\partial \mu(W)}{\partial W} = 0 \quad \text{and} \quad \frac{\partial \sigma^2(W)}{\partial W} = 2W\sigma_X^2.
    \end{equation}
    Substituting these into the formula from Lemma~\ref{lemma:normal}:
    \begin{equation}
    \begin{aligned}
    &\mathcal{I}_Z(W) = \frac{1}{\sigma^2(W)} \left( \frac{\partial \mu(W)}{\partial W} \right)^2 + \frac{1}{2\sigma^4(W)} \left( \frac{\partial \sigma^2(W)}{\partial W} \right)^2 \\
    =& \frac{1}{W^2\sigma_X^2}*0^2 + \frac{1}{2(W^2\sigma_X^2)^2}(2W\sigma_X^2)^2 = \frac{4W^2\sigma_X^4}{2W^4\sigma_X^4} = \frac{2}{W^2}.
    \end{aligned}
    \end{equation}
    % If the function $f$ is invertible, according to Lemma~\ref{lemma:invertible}, we have:
    % \begin{equation}
    %     \mathcal{I}_Y(W)=\mathcal{I}_Z(W)=\frac{2}{W^2}.
    % \end{equation}
    Since $Y=f(Z)$ and $f$ is independent of the parameter $W$, according to the Data Processing Inequality (DPI), the amount of information in data does not increase after any processing that is independent of the parameters. Therefore, we have:
    \begin{equation}
        \mathcal{I}_Y(W)\leq\mathcal{I}_Z(W)=\frac{2}{W^2}.
    \end{equation}
    Then, we analyze the Fisher information of $D=Wf'(WX+K)$ with respect to $W$. Expand $f'$ about $Z=K$:
    \begin{equation}
    f'(Z) = c_1 + c_2(Z-K) + R_2(Z),
    \end{equation}
where $c_1:=f'(K), c_2=f''(K)$, and $R_2(Z)$ is the Taylor remainder. Neglecting the remainder term $R_2$ (this is the first-order approximation), and using $Z-K = W X$, we obtain the approximate linear form as follows:
\begin{equation}
\label{eq:D-approx}
D \approx D_{\mathrm{approx}} := c_1 W + c_2 W^2 X.
\end{equation}
Under this approximation, $D_{\mathrm{approx}}$ follows a normal distribution $\mathcal{N}(\mu_D(W), \sigma^2_D(W))$ because $X$ also follows a normal distribution.
From Eq. \eqref{eq:D-approx}, we can calculate the mean and variance of $D_{\mathrm{approx}}$ as functions of $W$:
\begin{equation}
\begin{aligned}
&\mu_D(W) := \mathbb{E}[D_{\mathrm{approx}}] = W c_1, \\
&\sigma^2_D(W) := \operatorname{Var}(D_{\mathrm{approx}}) = W^4 c_2^2 \sigma_X^2,
\end{aligned}
\end{equation}
since $\mathbb{E}[X]=0$ and $\operatorname{Var}(X)=\sigma_X^2$. Differentiate the mean and variance w.r.t. $W$:
\begin{equation}
\begin{aligned}
    \frac{\partial \mu_D(W)}{\partial W} &= c_1, \ \ \frac{\partial \sigma^2_D(W)}{\partial W}= 4 W^3 c_2^2 \sigma_X^2.
\end{aligned}
\end{equation}
Invoking Lemma~\ref{lemma:normal} with the parameter $\theta=W$, mean $\mu_D(W)$ and variance $\sigma^2_D(W)$, we obtain the Fisher information of the approximate normal model:
\begin{equation}
\begin{aligned}
\mathcal{I}_D(W)
&\approx
\frac{1}{\sigma^2_D(W)}\Big(\frac{\partial \mu_D(W)}{\partial W}\Big)^2
+
\frac{1}{2\sigma^4_D(W)}\Big(\frac{\partial \sigma^2_D(W)}{\partial W}\Big)^2 \\[4pt]
&=
\frac{c_1^2}{W^4c_2^2\sigma_X^2}
+
\frac{1}{2\,W^8c_2^4\sigma_X^4}\,\Big(4W^3c_2^2\sigma_X^2\Big)^2 \\
&=
\frac{c_1^2}{W^4c_2^2\sigma_X^2}+\frac{8}{W^2} =
\frac{2}{W^2}(\frac{c_1^2}{2W^2c_2^2\sigma_X^2}+4).
\end{aligned}
\end{equation}
Therefore, we have the following conclusion: in this approximation,
\begin{equation}
    \mathcal{I}_D(W) \geq (\frac{c_1^2}{2W^2c_2^2\sigma_X^2}+4) \cdot \mathcal{I}_Y(W).
\end{equation}
This completes the proof of Theorem~\ref{theorem-apd:fisher}.
\end{proof}

\begin{figure}[t]
    \centering
    \chatbox[Example Queries in Query Set $\mathcal{Q}$]{
    \textbf{Base Query:} \\
    Complete the following code: def will\_it\_fly(q,w): ''' Write a function that returns True if the object q will fly, and False otherwise. The object ... \\
    \textbf{Perturbed Query:} \\
    1. Complete the following code : def will\_it\_fly ( q , w ) : `` ' Write a function that returns \red{Fact} if the object q will \red{flies} , and False \red{not} . The object \\
    2. Complete the following code : def will\_it\_fly ( q , w ) : `` ' \red{Book} a function that returns \red{Fact} if the object q will fly , and False \red{simply} . The object
    }
    \vspace{-0.5em}
    \caption{Example queries in the Query Set $\mathcal{Q}$.}
    \label{fig:example}
    \vspace{-1em}
\end{figure}

\begin{figure}[t!]
\vspace{-8pt}

\begin{algorithm}[H]
\caption{\name Query Set Construction}
\label{alg:query_construction}
\footnotesize

\begin{algorithmic}[1]
\renewcommand{\arraystretch}{1.2}
% \setlength{\itemsep}{1pt}

    % --- 合并后的 Input 部分 ---
    \REQUIRE $D_{source}$: Source dataset for base queries (\eg, HumanEval) \\
             \hspace*{1.2em} $n$: Number of base queries to select \\
             \hspace*{1.2em} $m$: Number of perturbed queries to generate per base query \\
             \hspace*{1.2em} $r$: Number of words to replace in each perturbation \\
             \hspace*{1.2em} $k$: Number of top semantically similar candidates to consider \\
             \hspace*{1.2em} $E_{word}$: Pre-trained word embedding model (\eg, GloVe)

    % --- Output 部分 ---
    \ENSURE $Q$: The query set containing base and perturbed queries
    
    \STATE $Q_b \gets \emptyset$ \textit{\# Initialize set for base queries}
    \STATE $Q \gets \emptyset$ \textit{\# Initialize the final query set}
    
    % \STATE \textit{\# Phase 1: Select Base Queries}
    \FOR{$i \gets 1$ \textbf{to} $n$} %\COMMENT{Select n base queries}
        \STATE $x_i \gets \text{SampleAndFormat}(D_{source})$ \textit{\# e.g., ``Complete the code: [snippet]''}
        \STATE $Q_b \gets Q_b \cup \{x_i\}$
    \ENDFOR
    \STATE $Q \gets Q \cup Q_b$ \textit{\# Add all base queries to the final set}
    
    % \STATE \textit{\# Phase 2: Generate Perturbations}
    \FOR{each base query $x_i \in Q_b$}
        \FOR{$j \gets 1$ \textbf{to} $m$}
            \STATE $x'_{i,j} \gets x_i$ \textit{\# Create a copy of the base query}
            \STATE $W_{replaceable} \gets \text{SelectRandomWords}(x_i, r)$ \textit{\# Select r words to replace}
            
            \FOR{each word $w \in W_{replaceable}$}
                \STATE $C_{candidates} \gets E_{word}.\text{TopKSimilar}(w, k)$ \textit{\# Find top-k similar words}
                \STATE $w_{replacement} \gets \text{RandomChoice}(C_{candidates})$
                \STATE $x'_{i,j} \gets \text{ReplaceWord}(x'_{i,j}, w, w_{replacement})$
            \ENDFOR
            
            \STATE $Q \gets Q \cup \{x'_{i,j}\}$ \textit{\# Add the perturbed query to the final set}
        \ENDFOR
    \ENDFOR
    
    \RETURN $Q$
\end{algorithmic}
\end{algorithm}
\vspace{-2.5em}
\end{figure}

\begin{figure}[t!]
\vspace{-8pt}
\begin{algorithm}[H]
% \begin{spacing}{1.3}
\renewcommand{\arraystretch}{1.2}
\caption{\name Fingerprint Generation}
\label{alg:fingerprint_generation}
\footnotesize

% --- 这里重定义显示名称 ---
% \renewcommand{\algorithmicrequire}{\textbf{Input:}}
% \renewcommand{\algorithmicensure}{\textbf{Output:}}
% -----------------------

\begin{algorithmic}[1]
\setlength{\itemsep}{1pt}
    
    % --- 合并后的 Input 部分 ---
    % 注意：\hspace*{1.2em} 的数值可能需要根据你的字体大小微调，目的是为了让第二行和第一行的文字对齐
    \REQUIRE $\mathcal{M}$: The target black-box LLM \\
             \hspace*{1.2em} $Q_b = \{x_i\}_{i=1}^n$: Set of base queries \\
             \hspace*{1.2em} $Q_p = \{\{x'_{i,j}\}_{j=1}^m\}_{i=1}^n$: Set of perturbed queries \\
             \hspace*{1.2em} $E$: Pre-trained sentence embedding model \\
             \hspace*{1.2em} $t$: Number of times to query the model for each input \\
             \hspace*{1.2em} $\alpha$: Regularization hyperparameter for Ridge Regression
    
    % --- Output 部分 ---
    \ENSURE $J_\mathcal{M}$: The final fingerprint (aggregated Jacobian matrix)
    
    \STATE $J_{list} \gets []$ \textit{\# Initialize list to store local Jacobians}
    
    \FOR{each base query $x_i \in Q_b$}
        \STATE $\Delta X_i, \Delta Y_i \gets [], []$ \textit{\# Initialize lists for input/output differences}
        
        \STATE \textit{\# Get stable embedding for the base query's output}
        \STATE $\{y_{i,k}\}_{k=1}^t \gets \text{QueryModelMultipleTimes}(\mathcal{M}, x_i, t)$
        \STATE $\bar{e}_{y_i} \gets \frac{1}{t} \sum_{k=1}^{t} E(y_{i,k})$
        \STATE $e_{x_i} \gets E(x_i)$
        
        \FOR{each perturbed query $x'_{i,j}$ for $x_i$}
            \STATE \textit{\# Get stable embedding for the perturbed query's output}
            \STATE $\{y'_{i,j,k}\}_{k=1}^t \gets \text{QueryModelMultipleTimes}(\mathcal{M}, x'_{i,j}, t)$
            \STATE $\bar{e}_{y'_{i,j}} \gets \frac{1}{t} \sum_{k=1}^{t} E(y'_{i,j,k})$
            \STATE $e_{x'_{i,j}} \gets E(x'_{i,j})$
            
            \STATE \textit{\# Calculate input and output embedding differences}
            \STATE $\Delta x_{i,j} \gets e_{x'_{i,j}} - e_{x_i}$
            \STATE $\Delta y_{i,j} \gets \bar{e}_{y'_{i,j}} - \bar{e}_{y_i}$
            \STATE Append $\Delta x_{i,j}$ to $\Delta X_i$
            \STATE Append $\Delta y_{i,j}$ to $\Delta Y_i$
        \ENDFOR
        
        \STATE \textit{\# Estimate the local Jacobian using Ridge Regression}
        \STATE $J_i \gets \arg\min_{J} \sum_{j=1}^{m} ||\Delta y_{i,j} - J \Delta x_{i,j}||_2^2 + \alpha ||J||_F^2$
        \STATE Append $J_i$ to $J_{list}$
    \ENDFOR
    
    % \STATE \textit{\# Aggregate local Jacobians to form the final fingerprint}
    \STATE $J_\mathcal{M} \gets \frac{1}{n} \sum_{i=1}^{n} J_{list}[i]$
    \RETURN $J_\mathcal{M}$
\end{algorithmic}
\end{algorithm}
% \end{spacing}
\vspace{-2.5em}
\end{figure}

\section{Implementation Details}
\label{apd:details}

\firstpartitle{Introduction to \bench} \bench is a benchmark that facilitates a unified evaluation of LLM fingerprinting methods~\cite{shao2025sok}. It includes 7 base model lineages, including Qwen2.5-7B, Qwen2.5-14B, Llama3.1-8B, Mistral-7B-v0.3, Gemma-2-2B, TinyLlama-1.1B-v1.0, and Llama-2-7B. \bench also implements 13 different post-development techniques that may be adopted to adjust the pre-trained base models. These techniques can significantly affect the behavior of the LLMs. These techniques include: \textbf{(1)} pretraining, \textbf{(2)} instructional tuning, \textbf{(3)} fine-tuning, \textbf{(4)} parameter-efficient fine-tuning, \textbf{(5)} quantization, \textbf{(6)} model merging, \textbf{(7)} distillation, \textbf{(8)} general-purpose system prompts, \textbf{(9)} role-playing system prompts, \textbf{(10)} chain-of-thought prompts, \textbf{(11)} sampling strategies, \textbf{(12)} retrieval-augmented generation (RAG), \textbf{(13)} adversarial manipulation. \bench includes a total number of 149 model instances.

\partitle{Detailed Settings for \name} To generate the base queries, we sample code snippets from HumanEval, take the first 20 words, and then prepend them with ``Complete the following code:'' to form a query for a completion task. When conducting word substitutions, we use GloVe~\cite{pennington2014glove} as the word embedding model. For each base query, we randomly replace 3 words with one of the top-10 similar words. The example queries are shown in Figure~\ref{fig:example} and the pseudocodes of query set construction and fingerprint generation in \name are in Algorithm~\ref{alg:query_construction}\&\ref{alg:fingerprint_generation}. The coefficient $\alpha$ in Ridge Regression is set to $0.001$. In our experiments, we utilize NVIDIA RTX 4090 GPUs with a total of 96GB VRAM.

\partitle{Detailed Settings for the Baseline Fingerprinting Methods} In our experiments, we take 1 white-box method and 4 state-of-the-art black-box methods as our baseline. For fair comparison, following \cite{zhang2025reef}, we also set the query limit to $q=200$ while applying these methods. The detailed settings are as follows.
\begin{itemize}
    \item \textbf{REEF}~\cite{zhang2025reef}: REEF is a forward-pass white-box fingerprinting method that relies on the intermediate representations as fingerprints. We strictly follow the original setting and sample 200 queries from TruthfulQA~\cite{lin2022truthfulqa} as the input queries.
    \item \textbf{LLMmap}~\cite{pasquini2025llmmap}: LLMmap is an untargeted black-box fingerprinting method. We adopt the 8 carefully-crafted prompts as the input queries and utilize the pre-trained open-set feature extraction model provided in its open-source repository.
    \item \textbf{MET}~\cite{gao2025model}: MET is also a untargeted black-box fingerprinting method that uses text completion prompts as input queries. The original MET uses 25 samples. Therefore, we run MET 8 times, leading to a total of 200 queries.
    \item \textbf{SEF}~\cite{shao2025sok}: SEF is an untargeted black-box fingerprinting method that straightforwardly uses the sentence embeddings of the outputs. Following original setting, we use Qwen3-Embedding-4B as the sentence embedding model and sample 200 queries from four different datasets, DROP~\cite{dua2019drop}, Ethics~\cite{hendrycks2021ethics}, PubMedQA~\cite{jin2019pubmedqa}, and HumanEval~\cite{chen2021humaneval}.
    \item \textbf{TRAP}~\cite{gubri2024trap}: TRAP is a targeted black-box fingerprinting method that leverages GCG~\cite{zou2023universal} to generate targeted query-response pairs. For each query, we use GCG to optimize 100 iterations.
\end{itemize}

\partitle{Detailed Settings of Ablation Studies} For the ablation study of different query data, except the HumanEval dataset~\cite{chen2021humaneval}, we select four other datasets as the source of the base queries, including the generative QA dataset TruthfulQA~\cite{lin2022truthfulqa}, the extractive QA dataset SQuAD~\cite{rajpurkar2016squad}, the complicated reasoning QA dataset DROP~\cite{dua2019drop}, and a text completion dataset Wikipedia~\cite{wikidump}. For the Wikipedia dataset, similar to HumanEval, we sample a text snippet with 20 words and prepend a completion instruction (\ie, ``Continue the following text:'') to construct a completion prompt. For the ablation study of different sentence embedding models, we select the following 5 models: all-mpnet-base-v2 (MPNet), all-MiniLM-L6-V2 (MiniLM)~\cite{reimers2019sentence}, potion-multilingual-128M (Model2Vec)~\cite{minishlab2024model2vec}, Qwen3-Embedding-0.6B (Qwen3)~\cite{qwen3embedding}, and EmbeddingGemma-300m (EmbeddingGemma)~\cite{embeddinggemma}. They are all classic or state-of-the-art sentence embedding models with different embedding dimensions.

\partitle{Detailed Settings of Adaptive Attacks} For input paraphrasing attacks, we utilize three small LLMs: Qwen2-0.5B-Instruct~\cite{team2024qwen2}, OLMo-2-0425-1B-Instruct~\cite{olmo2024olmo}, and Gemma-2-2b-it~\cite{gemma2}, to rewrite the input queries. The experiments of adaptive attacks are conducted on Qwen2.5-7B-Instruct and Llama3.1-8B-Instruct models.

\partitle{Detailed Settings of the Overhead Evaluation} The overhead evaluation in Section~\ref{sec:overhead} is conducted on four derivative models from the Qwen2.5-7B lineage. For fair comparison, we utilize 1 NVIDIA RTX 4090 48GB GPU to compute the runtime of all the evaluated LLM fingerprinting methods.

\begin{table}[t]
\centering
\tabcolsep=2.8mm
\renewcommand{\arraystretch}{1.1}
\caption{The fingerprint similarities across different LLMs while auditing closed-source LLMs using \name.}
\label{tab:closed_source}
\vspace{-0.5em}
\scalebox{0.85}{
\begin{tabular}{l c c c c}
\toprule
\textbf{Type} & \textbf{Deriative} & \multicolumn{3}{c}{\textbf{Independent Models}} \\
\cmidrule(lr){2-2} \cmidrule(lr){3-5}
\textbf{Models} & GPT-4o & Gemini-2.5 & Claude-4.5 & DeepSeek-V3   \\
\midrule
GPT-4o & 0.943 & 0.529 & 0.570 & 0.578 \\
\bottomrule
\end{tabular}
}
\vspace{-1em}
\end{table}

\section{Case Study of Closed-source Audit}

In this section, we include a case study of auditing closed-source LLMs using \name. Specifically, we utilize GPT-4o-mini (GPT-4o) as the source model and the same GPT-4o-mini API as the derivative, and take three other LLMs, Gemini-2.5-Flash-Lite (Gemini-2.5), Claude-Haiku-4.5 (Claude-4.5), and DeepSeek-V3, as independent models. The results in Table~\ref{tab:closed_source} show that the fingerprint similarity of fingerprints from the same GPT-4o-mini model is $0.943$, which is close to $1$, while other independent LLMs exhibit low similarity ($\approx 0.50$). These results demonstrate the feasibility and potential of our \name to audit closed-source, black-box LLMs.

\vfill
\section{Ethic Statement}

The unauthorized usage of LLM has raised significant concerns about copyright infringements, while LLM fingerprinting offers a promising solution. In this paper, we propose a novel black-box fingerprinting method, \name. Our method is designed for the purely defensive purpose of copyright auditing and does not introduce new threats or vulnerabilities. Furthermore, our work utilizes publicly available datasets, such as HumanEval, for query construction and does not infringe on the privacy of any individual. The research does not involve any human subjects. As such, this work does not raise any major ethical issues.

\end{document}